\theoremstyle{plain}
\newtheorem{thm}{Theorem}
\newtheorem{prop}[thm]{Proposition}
\newtheorem{lemma}[thm]{Lemma}
\theoremstyle{definition}
\newtheorem{definition}{Definition}
\theoremstyle{remark}
\newtheorem{remark}{Remark}[section]
\newcommand{\Apx}[1]{$#1$-approx\-i\-mation}
\newcommand{\floor}[1]{\lfloor #1 \rfloor}
\newcommand{\<}{\mathord{<}}
\newcommand{\even}[1]{\ddot{#1}}
\newcommand{\odd}[1]{\dot{#1}}
\newcommand{\N}{\ensuremath{\mathbb{N}}}
\newcommand{\Z}{\ensuremath{\mathbb{Z}}}
\newcommand{\R}{\ensuremath{\mathbb{R}}}
\newcommand{\Alg}{\ensuremath{\mathsf{A}}}
\newcommand{\Blg}{\ensuremath{\mathsf{B}}}
\newcommand{\Vlg}{\ensuremath{\mathsf{V}}}
\newcommand{\PI}{\ensuremath{\mathsf{\Pi}}}
\newcommand{\Gr}{\ensuremath{\mathcal{G}}}
\newcommand{\Hr}{\ensuremath{\mathcal{H}}}
\newcommand{\Cr}{\ensuremath{\mathcal{C}}}
\newcommand{\Tr}{\ensuremath{\mathcal{T}}}
\newcommand{\Wr}{\ensuremath{\mathcal{W}}}
\newcommand{\Ur}{\ensuremath{\mathcal{U}}}
\newcommand{\Cay}{\ensuremath{\mathcal{C}}}
\newcommand{\Ff}{\ensuremath{\mathcal{F}}}
\newcommand{\Wf}{\ensuremath{\mathfrak{W}}}
\newcommand{\PO}{\mathsf{PO}}
\newcommand{\OI}{\mathsf{OI}}
\newcommand{\ID}{\mathsf{ID}}
\newcommand{\LCL}{\mathsf{LCL}}
\newcounter{myexternalpagenum}
\newcommand{\definepage}[1]{\stepcounter{myexternalpagenum}\edef#1{\arabic{myexternalpagenum}}}
\newcommand{\mydoi}[2]{\href{http://dx.doi.org/#1}{#2}}
\newcommand{\myeprint}[2]{\href{http://arxiv.org/abs/#1}{#2}}
\newcommand{\myurl}[2]{\href{#1}{#2}}
\titleformat{\section}          {\raggedright\normalfont\Large\bfseries}{\thesection}{1em}{}
\titleformat{\subsection}       {\raggedright\normalfont\large\bfseries}{\thesubsection}{1em}{}
\begin{document}

\vspace*{5ex}
\begin{center}
    {\Large \textbf{Lower Bounds for Local Approximation}}

    \bigskip
    \bigskip

    \href{http://www.cs.helsinki.fi/people/mgoos}{Mika G{\"o}{\"o}s},
    \href{http://www.cs.helsinki.fi/people/jghirvon}{Juho Hirvonen}, and
    \href{http://www.cs.helsinki.fi/u/josuomel/}{Jukka Suomela}

    \bigskip
    \href{http://www.hiit.fi/}{Helsinki Institute for Information Technology HIIT} \\
    \href{http://www.cs.helsinki.fi/en}{Department of Computer Science} \\
    \href{http://www.helsinki.fi/university/}{University of Helsinki}, Finland
    
    \bigskip
    \{mika.goos, juho.hirvonen, jukka.suomela\}@cs.helsinki.fi
\end{center}

\bigskip
\bigskip
\noindent\textbf{Abstract.}
In the study of deterministic distributed algorithms it is commonly assumed that each node has a unique $O(\log n)$-bit identifier. We prove that for a general class of graph problems, local algorithms (constant-time distributed algorithms) do not need such identifiers: a port numbering and orientation is sufficient.

Our result holds for so-called \emph{simple $\PO$-checkable graph optimisation problems}; this includes many classical packing and covering problems such as vertex covers, edge covers, matchings, independent sets, dominating sets, and edge dominating sets. We focus on the case of bounded-degree graphs and show that if a local algorithm finds a constant-factor approximation of a simple $\PO$-checkable graph problem with the help of unique identifiers, then the same approximation ratio can be achieved on anonymous networks.

As a corollary of our result and by prior work, we derive a tight lower bound on the local approximability of the \emph{minimum edge dominating set problem}.

Our main technical tool is an algebraic construction of \emph{homogeneously ordered graphs}: We say that a graph is $(\alpha,r)$-homogeneous if its nodes are linearly ordered so that an $\alpha$ fraction of nodes have pairwise isomorphic radius-$r$ neighbourhoods. We show that there exists a finite $(\alpha,r)$-homogeneous $2k$-regular graph of girth at least $g$ for any $\alpha < 1$ and any $r$, $k$, and $g$.

\thispagestyle{empty}
\setcounter{page}{0}
\newpage

\section{Introduction}

In this work, we study deterministic distributed algorithms under three different assumptions; see Figure~\ref{fig:models}.
\begin{itemize}[leftmargin=4em]
    \item[($\ID$)] \emph{Networks with unique identifiers}. Each node is given a unique $O(\log n)$-bit label.
    \item[($\OI$)] \emph{Order-invariant algorithms}. There is a linear order on nodes. Equivalently, the nodes have unique labels, but the output of an algorithm is not allowed to change if we relabel the nodes while preserving the relative order of the labels.
    \item[($\PO$)] \emph{Anonymous networks with a port numbering and orientation}. For each node, there is a linear order on the incident edges, and for each edge, there is a linear order on the incident nodes. Equivalently, a node of degree $d$ can refer to its neighbours by integers $1, 2, \dotsc, d$, and each edge is oriented so that the endpoints know which of them is the head and which is the tail.
\end{itemize}

\begin{figure}
    \centering
    \includegraphics[page=\PModels]{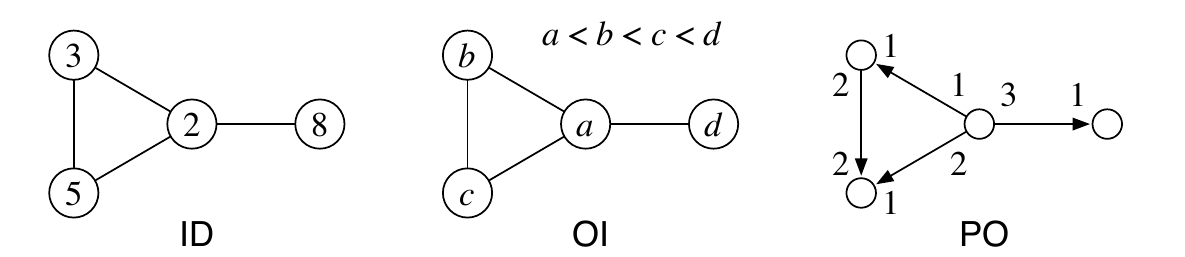}
    \caption{Three models of distributed computing.}\label{fig:models}
\end{figure}

While unique identifiers are often useful, we will show that they are seldom needed in local algorithms (constant-time distributed algorithms): there is a general class of graph problems such that local algorithms in $\PO$ are able to produce as good approximations as local algorithms in $\OI$ or~$\ID$.

\subsection{Graph Problems}

We study graph problems that are related to the structure of an unknown communication network. Each node in the network is a computer; each computer receives a \emph{local input}, it can exchange messages with adjacent nodes, and eventually it has to produce a \emph{local output}. The local outputs constitute a solution of a graph problem---for example, if we study the dominating set problem, each node produces one bit of local output, indicating whether it is part of the dominating set. The \emph{running time} of an algorithm is the number of synchronous communication rounds.

From this perspective, the models $\ID$, $\OI$, and $\PO$ are easy to separate. Consider, for example, the problem of finding a maximal independent set in an $n$-cycle. In $\ID$ model the problem can be solved in $\Theta(\log^* n)$ rounds \cite{cole86deterministic, linial92locality}, while in $\OI$ model we need $\Theta(n)$ rounds, and the problem is not soluble at all in $\PO$, as we cannot break symmetry---see Figure~\ref{fig:cycles}. Hence $\ID$ is strictly stronger than $\OI$, which is strictly stronger than $\PO$.

\begin{figure}
    \centering
    \includegraphics[page=\PCycles]{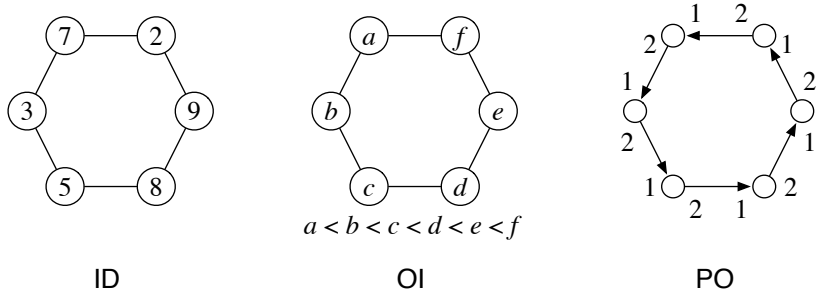}
    \caption{In $\ID$, the numerical identifiers break symmetry everywhere---for example, a maximal independent set can be found in $O(\log^* n)$ rounds. In $\OI$, we can have a cycle with only one ``seam'', and in $\PO$ we can have a completely symmetric cycle.}\label{fig:cycles}
\end{figure}

\subsection{Local Algorithms}\label{ssec:local}

In this work we focus on \emph{local algorithms}, i.e., distributed algorithms that run in a constant number of synchronous communication rounds, independently of the number of nodes in the network~\cite{naor95what, suomela09survey}. The above example separating $\ID$, $\OI$, and $\PO$ no longer applies, and there has been a conspicuous lack of \emph{natural} graph problems that would separate $\ID$, $\OI$, and $\PO$ from the perspective of local algorithms.

Indeed, there are results that show that many problems that can be solved with a local algorithm in $\ID$ also admit a local algorithm in $\OI$ or $\PO$. For example, the seminal paper by Naor and Stockmeyer~\cite{naor95what} studies so-called $\LCL$ problems---these include problems such as graph colouring and maximal matchings on bounded-degree graphs. The authors show that $\ID$ and $\OI$ are indeed equally expressive among $\LCL$ problems. The followup work by Mayer, Naor, and Stockmeyer~\cite{mayer95local} hints of a stronger property:
\begin{enumerate}[label=(\roman*)]
    \item \emph{Weak $2$-colouring} is an $\LCL$ problem that can be solved with a local algorithm in $\ID$ model~\cite{naor95what}. It turns out that the same problem can be solved in $\PO$ model as well~\cite{mayer95local}.
\end{enumerate}
Granted, contrived counterexamples do exist: there are $\LCL$ problems that are soluble in $\OI$ but not in $\PO$. However, most of the classical graph problems that are studied in the field of distributed computing are \emph{optimisation problems}, not $\LCL$ problems.

\subsection{Local Approximation}\label{ssec:intro-local-apx}

In what follows, we will focus on graph problems in the case of \emph{bounded-degree graphs}; that is, there is a known constant $\Delta$ such that the degree of any node in any graph that we may encounter is at most~$\Delta$. Parity often matters; hence we also define $\Delta' = 2 \floor{\Delta/2}$.

In this setting, the best possible approximation ratios are surprisingly similar in $\ID$, $\OI$, and $\PO$. The following hold for any given $\Delta \ge 2$ and $\epsilon > 0$:
\begin{enumerate}[resume*]
    \item \emph{Minimum vertex cover} can be approximated to within factor $2$ in each of these models \cite{astrand09vc2apx, astrand10vc-sc}. This is tight: \Apx{(2-\epsilon)} is not possible in any of these models \cite{czygrinow08fast, lenzen08leveraging, suomela09survey}.
    \item \emph{Minimum edge cover} can be approximated to within factor $2$ in each of these models \cite{suomela09survey}. This is tight: \Apx{(2-\epsilon)} is not possible in any of these models \cite{czygrinow08fast, lenzen08leveraging, suomela09survey}.
    \item \emph{Minimum dominating set} can be approximated to within factor $\Delta'+1$ in each of these models \cite{astrand10weakly-coloured}. This is tight: \Apx{(\Delta'+1-\epsilon)} is not possible in any of these models \cite{czygrinow08fast, lenzen08leveraging, suomela09survey}.
    \item \emph{Maximum independent set} cannot be approximated to within any constant factor in any of these models \cite{czygrinow08fast, lenzen08leveraging}.
    \item \emph{Maximum matching} cannot be approximated to within any constant factor in any of these models \cite{czygrinow08fast, lenzen08leveraging}.
\end{enumerate}
This phenomenon has not been fully understood: while there are many problems with identical approximability results for $\ID$, $\OI$, and $\PO$, it has not been known whether these are examples of a more general principle or merely isolated coincidences. In fact, for some problems, tight approximability results have been lacking for $\ID$ and $\OI$, even though tight results are known for $\PO$:
\begin{enumerate}[resume*]
    \item \emph{Minimum edge dominating set} can be approximated to within factor $4-2/\Delta'$ in each of these models \cite{suomela10eds}. This is tight for $\PO$ but only near-tight for $\ID$ and $\OI$: \Apx{(4-2/\Delta'-\epsilon)} is not possible in $\PO$ \cite{suomela10eds}, and \Apx{(3-\epsilon)} is not possible in $\ID$ and $\OI$ \cite{czygrinow08fast, lenzen08leveraging, suomela09survey}.
\end{enumerate}

In this work we prove a theorem unifying all of the above observations---they are indeed examples of a general principle. As a simple application of our result, we settle the local approximability of the minimum edge dominating set problem by proving a tight lower bound in $\ID$ and $\OI$.

\subsection{Main Result}

A \emph{simple graph problem} $\PI$ is an optimisation problem in which a feasible solution is a subset of nodes or a subset of edges, and the goal is to either minimise or maximise the size of a feasible solution. We say that $\PI$ is a \emph{$\PO$-checkable graph problem} if there is a local $\PO$-algorithm $\Alg$ that recognises a feasible solution. That is, $\Alg(\Gr,X,v) = 1$ for all nodes $v \in V(\Gr)$ if $X$ is a feasible solution of problem $\PI$ in graph $\Gr$, and $\Alg(\Gr,X,v) = 0$ for some node $v \in V(\Gr)$ otherwise---here $\Alg(\Gr,X,v)$ is the output of a node $v$ if we run algorithm $\Alg$ on graph $\Gr$ and the local inputs form an encoding of $X$.

Let $\varphi\colon V(\Hr) \to V(\Gr)$ be a surjective graph homomorphism from graph $\Hr$ to graph~$\Gr$. If $\varphi$ preserves vertex degrees, i.e., $\deg_{\Hr}(u)=\deg_{\Gr}(\varphi(u))$, then $\varphi$ is called a \emph{covering map}, and $\Hr$ is said to be a \emph{lift} of $\Gr$. The \emph{fibre} of $u\in V(\Gr)$ is the set $\varphi^{-1}(u)$ of pre-images of $u$. We usually consider $n$-lifts that have fibres of the same cardinality $n$. It is a basic fact that a connected lift $\Hr$ of $\Gr$ is an $n$-lift for some $n$. See Figure~\ref{fig:lifts} for an illustration.

\begin{figure}
    \centering
    \includegraphics[page=\PLifts]{figs.pdf}
    \caption{Graph $\Hr$ is a lift of $\Gr$. The covering map $\varphi\colon V(\Hr) \to V(\Gr)$ maps $a_i \mapsto a$, $b_i \mapsto b$, $c_i \mapsto c$, and $d_i \mapsto d$ for each $i = 1, 2$. The fibre of $a \in V(\Gr)$ is $\{a_1, a_2\} \subseteq V(\Hr)$; all fibres have the same size.}\label{fig:lifts}
\end{figure}

Let $\Ff$ be a family of graphs. We say that $\Ff$ is \emph{closed under lifts} if $\Gr \in \Ff$ implies $\Hr \in \Ff$ for all lifts $\Hr$ of $\Gr$. A family is \emph{closed under connected lifts} if $\Gr \in \Ff$ implies $\Hr \in \Ff$ whenever $\Hr$ and $\Gr$ are connected graphs and $\Hr$ is a lift of~$\Gr$.

\pagebreak 

Now we are ready to state our main theorem.
\begin{thm}[Main Theorem]\label{thm:main}
Let $\PI$ be a simple $\PO$-checkable graph problem. Assume one of the following:
\begin{itemize}
    \item General version: $\Ff$ is a family of bounded degree graphs, and it is closed under lifts.
    \item Connected version: $\Ff$ is a family of connected bounded degree graphs, it does not contain any trees, and it is closed under connected lifts.
\end{itemize}
If there is a local $\ID$-algorithm $\Alg$ that finds an \Apx{\alpha} of $\PI$ in $\Ff$, then there is a local $\PO$-algorithm $\Blg$ that finds an \Apx{\alpha} of $\PI$ in $\Ff$.
\end{thm}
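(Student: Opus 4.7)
The plan is to split the reduction into two stages: $\ID\to\OI$ and $\OI\to\PO$. Stage one follows the Ramsey-theoretic template of Naor--Stockmeyer~\cite{naor95what}: a local $\ID$-algorithm of running time $T$ outputs a function of a constant-size labeled ball, so once the identifier space is sufficiently large, Ramsey's theorem yields an order-homogeneous subset of identifiers on which this function depends only on the \emph{relative} order of identifiers. Restricting legal inputs to use identifiers from this subset turns $\Alg$ into an $\OI$-algorithm with the same running time and approximation ratio on $\Ff$.

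Stage two is the substantive step and uses the homogeneously ordered lift construction promised by the abstract. Let $\Alg$ be an $\OI$-algorithm of running time $T$ and let $\Vlg$ be the guaranteed $\PO$-checker for $\PI$ of running time $T'$; set $R=T+T'$. Given a port-numbered $\Gr\in\Ff$ and a target error $\epsilon>0$, the algorithm $\Blg$ conceptually works inside a lift $\Hr$ of $\Gr$ that is $(1-\delta,R)$-homogeneous for a small $\delta=\delta(\epsilon,\Gr)$ and has girth exceeding $2R$. The girth makes each radius-$R$ ball in $\Hr$ a labeled tree determined purely by the port-orientation data of $\Gr$, while the homogeneity ensures that all \emph{good} (homogeneous) nodes of $\Hr$ share one ordered radius-$R$ neighborhood, so $\Alg$ produces the same output bit at every good node. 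The algorithm $\Blg$ at $v\in V(\Gr)$ outputs this common value, which it computes from the port-structure of the radius-$T$ neighborhood of $v$ alone.

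Two properties must then be verified. For \emph{feasibility}, write $S_{\Gr}$ for $\Blg$'s output, $S_{\Hr}$ for its fibre-uniform lift to $\Hr$, and $S_{\Alg}$ for the genuine $\Alg$ output on $\Hr$. Since $\Vlg$ is a $\PO$-algorithm and $\varphi\colon V(\Hr)\to V(\Gr)$ is a covering map, $\Vlg(\Gr,S_{\Gr},v)=\Vlg(\Hr,S_{\Hr},v_h)$ for every $v_h$ in the fibre of $v$. Choosing $v_h$ whose entire radius-$T'$ neighborhood lies in the good set---possible in every fibre once $\delta$ is small enough and the construction distributes good nodes across all fibres---one has $S_{\Hr}=S_{\Alg}$ throughout that neighborhood, so the checker returns $1$. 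For \emph{approximation}, $|S_{\Hr}|=n|S_{\Gr}|$ (fibre size $n$), $|S_{\Alg}|\le \alpha\cdot\mathrm{opt}(\Hr)\le \alpha n\cdot\mathrm{opt}(\Gr)$ by lifting an optimum of $\Gr$, and the two output sets on $\Hr$ differ only on non-good nodes, giving $|S_{\Gr}|\le \alpha\cdot\mathrm{opt}(\Gr) + \delta|V(\Gr)|$; for simple packing/covering problems on non-degenerate bounded-degree graphs $\mathrm{opt}(\Gr)=\Omega(|V(\Gr)|)$, so driving $\delta\to 0$ recovers the target ratio.

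The principal obstacle is the homogeneous-lift construction itself: one must realise a given port-numbered $\Gr$ as a quotient of a finite graph $\Hr$ that is simultaneously highly homogeneous, has large girth, carries the port-orientation data of $\Gr$, and meets every fibre with sufficiently many good nodes---the algebraic task announced as the paper's main technical contribution, which the argument above treats as a black box. A secondary subtlety is that this procedure naively yields ratio $\alpha+O(\delta)$ rather than $\alpha$ exactly, absorbed either by rescaling $\delta$ per input size or by noting that the set of ratios achievable by local algorithms is closed under taking infima. Finally, the connected version excludes trees because a connected tree admits no non-trivial connected lift, causing the fibre-averaging trick to collapse.
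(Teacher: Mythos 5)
Your two-stage factoring ($\ID\to\OI$ via Ramsey, then $\OI\to\PO$ via homogeneous lifts) identifies exactly the two technical ingredients the paper uses, and your Stage~2 matches the paper's Section~\ref{ssec:mainthm-oi} quite closely. However, the paper does \emph{not} cleanly separate the argument into these two stages, and there is a genuine gap in your Stage~1 as phrased. The claim that ``Ramsey's theorem yields an order-homogeneous subset of identifiers\dots turns $\Alg$ into an $\OI$-algorithm with the same approximation ratio on $\Ff$'' does not go through directly: Ramsey numbers grow far faster than any polynomial $s(n)$, so the monochromatic $m$-subset $J$ produced by Ramsey's theorem need not lie inside $\{1,\dotsc,s(n)\}$ for a small instance $\Gr$ with $|\Gr|=n$. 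Hence the $\OI$-algorithm you extract is not, in general, witnessed by any \emph{legal} identifier assignment on $\Gr$ itself, and one cannot conclude it inherits $\Alg$'s $\alpha$-approximation guarantee on all of $\Ff$. This is precisely why the paper does \emph{not} establish an unconditional $\ID\to\OI$ reduction; instead, it interleaves the Ramsey step with the lift construction (Propositions~\ref{prop:agreement} and the following one): the Ramsey colouring is applied on a huge disjoint union $\Hr=\bigcup_{i\in I} f_{m,i}(\Gr_\epsilon)$, a lift of $\Gr_\epsilon$ so large that its identifiers $V(\Hr)\subseteq\{1,\dotsc,nkR(m)\}$ fall within $\{1,\dotsc,s(|\Hr|)\}$ because $s(n)=\omega(n)$. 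The lift amplification is not optional; it is what makes the Ramsey identifiers legal.

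Two smaller remarks. First, in your approximation bound $|S_\Gr|\le\alpha\cdot\mathrm{opt}(\Gr)+\delta|V(\Gr)|$, the appeal to $\mathrm{opt}(\Gr)=\Omega(|V(\Gr)|)$ is unnecessary (and such linear lower bounds on $\mathrm{opt}$ are not part of the hypothesis): $|S_\Gr|$ and $\mathrm{opt}(\Gr)$ are fixed integers independent of $\delta$, so letting $\delta\to0$ already yields $|S_\Gr|\le\alpha\cdot\mathrm{opt}(\Gr)$. Second, your explanation of the tree exclusion is close but not quite the reason used in the paper: the issue is not the non-existence of non-trivial connected lifts of the homogeneous product $\Gr_\epsilon$ per se, but that the final step reconnects the disjoint-union lift $\Hr$ by re-matching the fibres over a \emph{non-bridge} edge of $\Gr_\epsilon$, and trees have no such edge. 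Apart from these points, your sketch captures the paper's mechanism and correctly isolates Theorem~\ref{thm:homog-graph} as the black box.
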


While the definitions are somewhat technical, it is easy to verify that the result is widely applicable:
\begin{enumerate}
    \item Vertex covers, edge covers, matchings, independent sets, dominating sets, and edge dominating sets are simple $\PO$-checkable graph problems.
    \item Bounded-degree graphs, regular graphs, and cyclic graphs are closed under lifts.
    \item Connected bounded-degree graphs, connected regular graphs, and connected cyclic graphs are closed under connected lifts. 
\end{enumerate}

\subsection{An Application}

The above result provides us with a powerful tool for proving lower-bound results: we can easily transfer negative results from $\PO$ to $\OI$ and $\ID$. We demonstrate this strength by deriving a new lower bound result for the minimum edge dominating set problem.
\begin{thm}\label{thm:eds}
    Let $\Delta \ge 2$, and let $\Alg$ be a local $\ID$-algorithm that finds an \Apx{\alpha} of a minimum edge dominating set on connected graphs of maximum degree $\Delta$. Then $\alpha \ge \alpha_0$, where
    \[
        \alpha_0 = 4-2/\Delta' \ \text{ and }\ 
        \Delta' = 2 \floor{\Delta/2}.
    \]
    This is tight: there is a local $\ID$-algorithm that finds an \Apx{\alpha_0}.
\end{thm}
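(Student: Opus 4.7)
The plan is to combine the already-known tight $\PO$ lower bound of \cite{suomela10eds} with Theorem~\ref{thm:main} so as to transfer the bound to $\ID$. The matching upper bound $\alpha_0 = 4 - 2/\Delta'$ comes for free: a local $\ID$-algorithm achieving this ratio is already provided in \cite{suomela10eds}, so only the matching lower bound needs work.

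For the lower bound I would restrict attention to the subfamily $\Ff'$ of connected $\Delta'$-regular graphs and verify the hypotheses of the connected version of Theorem~\ref{thm:main}. Closure under connected lifts is immediate, since a covering map preserves degrees, so any connected lift of a connected $\Delta'$-regular graph is itself connected and $\Delta'$-regular. The family contains no trees because $\Delta' \ge 2$ and any finite $\Delta'$-regular graph contains a cycle. Moreover, minimum edge dominating set is clearly a \emph{simple} graph problem, and it is $\PO$-checkable: a set $X \subseteq E(\Gr)$ is an edge dominating set iff every node $v$ is incident to some edge which itself lies in $X$ or shares an endpoint with an edge in $X$, a condition verifiable in a constant number of $\PO$-rounds by inspecting the local inputs in the radius-$2$ neighbourhood of each node.

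The lower bound then follows by contraposition. Suppose some local $\ID$-algorithm $\Alg$ achieves an $(\alpha_0 - \epsilon)$-approximation of EDS on connected graphs of maximum degree $\Delta$ for some $\epsilon > 0$. Since $\Ff' \subseteq \Ff$, restricting $\Alg$ to $\Ff'$ still yields an $(\alpha_0 - \epsilon)$-approximation there, so by Theorem~\ref{thm:main} there would also be a local $\PO$-algorithm $\Blg$ with the same approximation ratio on $\Ff'$. This contradicts the $\PO$ lower bound of \cite{suomela10eds}, which rules out any local $\PO$-algorithm achieving a $(4 - 2/\Delta' - \epsilon)$-approximation on $\Delta'$-regular graphs.

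The only real point of care is bookkeeping: one must make sure that the hard instances used in the $\PO$-lower bound of \cite{suomela10eds} either already lie in $\Ff'$ or can be transported into $\Ff'$. If the original hard instances are merely $\Delta'$-regular (possibly disconnected), I would pass to a sufficiently large connected lift—whose local neighbourhoods around each node are isomorphic to those in the original graph—so the same local indistinguishability argument that yields the $\PO$ lower bound carries over inside $\Ff'$ without change. Everything else in the argument is essentially plug-and-play, given Theorem~\ref{thm:main}.
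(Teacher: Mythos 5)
Your proof is correct and takes essentially the same approach as the paper: invoke the connected version of Theorem~\ref{thm:main} to transfer the known tight $\PO$ lower bound of~\cite{suomela10eds} to $\ID$, after checking that the relevant family is closed under connected lifts and contains no trees. The only cosmetic difference is the choice of family: you take all connected $\Delta'$-regular graphs, whereas the paper takes the family of connected lifts of the specific hard instance $\Gr_0$ from~\cite{suomela10eds}; both work, and the paper sidesteps your final ``bookkeeping'' worry by noting directly that the prior work already exhibits a \emph{connected} $\Delta'$-regular hard instance.
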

\begin{proof}
    By prior work~\cite{suomela10eds}, it is known that there is a connected $\Delta'$-regular graph $\Gr_0$ such that the approximation factor of any local $\PO$-algorithm on $\Gr_0$ is at least $\alpha_0$. Let $\Ff_0$ consist of all connected lifts of $\Gr_0$, and let $\Ff$ consist of all connected graphs of degree at most $\Delta$. We make the following observations.
    \begin{enumerate}
        \item We have $\Ff_0 \subseteq \Ff$; by assumption, $\Alg$ finds an \Apx{\alpha} in $\Ff_0$.
        \item Family $\Ff_0$ consists of connected graphs of degree at most $\Delta$, it does not contain any trees, and it is closed under connected lifts. We can apply the connected version of the main theorem: there is a local $\PO$-algorithm $\Blg$ that finds an \Apx{\alpha} in $\Ff_0$.
        \item However, $\Gr_0 \in \Ff_0$, and hence $\alpha \ge \alpha_0$.
    \end{enumerate}
    The matching upper bound is presented in prior work~\cite{suomela10eds}.
\end{proof}

\subsection{Overview}

Informally, our proof of the main theorem is structured as follows.
\begin{enumerate}
    \item Fix a graph problem $\PI$, a graph family $\Ff$, and an $\ID$-algorithm $\Alg$ as in the statement of Theorem~\ref{thm:main}. Let $r$ be the running time of $\ID$-algorithm $\Alg$.
    \item Let $\Gr \in \Ff$ be a graph with a port numbering and orientation.
    \item \mbox{Section~\ref{ssec:homog-lift}:} We construct a certain lift $\Gr_\epsilon \in \Ff$ of $\Gr$. Graph $\Gr_\epsilon$ inherits the port numbering and the orientation from $\Gr$.
    \item \mbox{Section~\ref{ssec:mainthm-oi}:} We show that there exists a linear order $<_\epsilon$ on the nodes of $\Gr_\epsilon$ that gives virtually no new information in comparison with the port numbering and orientation. If we have an $\OI$-algorithm $\Alg'$ with running time~$r$, then we can simulate $\Alg'$ with a $\PO$-algorithm $\Blg'$ almost perfectly on $\Gr_\epsilon$: the outputs of $\Alg'$ and $\Blg'$ agree for a $(1-\epsilon)$ fraction of nodes. We deduce that the approximation ratio of $\Alg'$ on $\Ff$ cannot be better than the approximation ratio of $\Blg'$ on~$\Ff$.
    \item \mbox{Section~\ref{ssec:mainthm-id}:} We apply Ramsey's theorem to show that the unique identifiers do not help, either. We can construct a $\PO$-algorithm $\Blg$ that simulates $\Alg$ in the following sense: there exists an assignment of unique identifiers on a lift $\Hr \in \Ff$ of $\Gr_\epsilon$ such that the outputs of $\Alg$ and $\Blg$ agree for a $(1-\epsilon)$ fraction of nodes. We deduce that the approximation ratio of $\Alg$ on $\Ff$ cannot be better than the approximation ratio of $\Blg$ on~$\Ff$.
\end{enumerate}

Now if graph $\Gr$ was a directed cycle, the construction would be standard; see, e.g., Czygrinow et al.~\cite{czygrinow08fast}. In particular, $\Gr_\epsilon$ and $\Hr$ would simply be long cycles, and $<_\epsilon$ would order the nodes along the cycle---there would be only one ``seam'' in $(\Gr_\epsilon,\<_\epsilon)$ that could potentially help $\Alg'$ in comparison with $\Blg'$, and only an $\epsilon$ fraction of nodes are near the seam.

However, the case of a general $\Gr$ is more challenging. Our main technical tool is the construction of so-called homogeneous graphs; see Section~\ref{ssec:homog}. Homogeneous graphs are regular graphs with a linear order that is useless from the perspective of $\OI$-algorithms: for a $(1-\epsilon)$ fraction of nodes, the local neighbourhoods are isomorphic. Homogeneous graphs trivially exist; however, our proof calls for homogeneous graph of an arbitrarily high degree and an arbitrarily large girth (i.e., there are no short cycles---the graph is locally tree-like). In Section~\ref{sec:homog-graphs} we use an algebraic construction to prove that such graphs exist.

\subsection{Discussion}

In the field of distributed algorithms, the running time of an algorithm is typically analysed in terms of two parameters: $n$, the number of nodes in the graph, and $\Delta$, the maximum degree of the graph. In our work, we assumed that $\Delta$ is a constant---put otherwise, our work applies to algorithms that have a running time independent of $n$ but arbitrarily high as a function of $\Delta$. The work by Kuhn et al.~\cite{kuhn04what, kuhn06price, kuhn10local} studies the dependence on $\Delta$ more closely: their lower bounds on approximation ratios apply to algorithms that have, for example, a running time sublogarithmic in $\Delta$.

While our result is very widely applicable, certain extensions have been left for future work. One example is the case of planar graphs \cite{czygrinow08fast},~\cite[\S13]{lenzen11phd}. The family of planar graphs is not closed under lifts, and hence Theorem~\ref{thm:main} does not apply. Another direction that we do not discuss at all is the case of randomised algorithms.

\section{Three Models of Distributed Computing}

In this section we make precise the notion of a \emph{local algorithm} in each of the models $\ID$, $\OI$ and $\PO$. First, we discuss the properties common to all the models.

We start by fixing a graph family $\Ff$ where every $\Gr=(V(\Gr),E(\Gr))\in\Ff$ has maximum degree at most $\Delta\in\N$. We consider algorithms $\Alg$ that operate on graphs in $\Ff$; the properties of $\Alg$ (e.g., its running time) are allowed to depend on the family $\Ff$ (and, hence, on $\Delta$). We denote by $\Alg(\Gr,u)\in\Omega$ the output of $\Alg$ on a node $u\in V(\Gr)$. Here, $\Omega$ is a finite set of possible outputs of $\Alg$ in $\Ff$. If the solutions to $\PI$ are sets of vertices, we shall have $\Omega = \{0,1\}$ so that the solution produced by $\Alg$~on $\Gr$, denoted $\Alg(\Gr)$, is the set of nodes $u$ with $\Alg(\Gr,u)=1$. Similarly, if the solutions to $\PI$ are sets of edges, we shall have $\Omega = \{0,1\}^\Delta$ so that the $i$th component of the vector $\Alg(\Gr,u)$ indicates whether the $i$th edge incident to $u$ is included in the solution $\Alg(\Gr)$---in each of the models a node will have a natural ordering of its incident edges.

Let $r\in \N$ denote the constant running time of $\Alg$ in $\Ff$. This means that a node $u$ can only receive messages from nodes within distance $r$ in $\Gr$, i.e., from nodes in the radius-$r$ neighbourhood
\[
    B_{\Gr}(u,r) = \bigl\{ v\in V(\Gr) : \operatorname{dist}_{\Gr}(u,v) \le r \bigr\}.
\]
Let $\tau(\Gr,u)$ denote the structure $(\Gr,u)$ restricted to the vertices $B_{\Gr}(u,r)$, i.e., in symbols,
\[
    \tau(\Gr,u) = (\Gr,u) \upharpoonright B_{\Gr}(u,r).
\]
Then $\Alg(\Gr,u)$ is a function of the data $\tau(\Gr,u)$ in that $\Alg(\Gr,u) = \Alg(\tau(\Gr,u))$. The models $\ID$, $\OI$ and $\PO$ impose further restrictions on this function.

\subsection{Model \texorpdfstring{$\boldsymbol\ID$}{ID}}\label{sec:id-model}

Local $\ID$-algorithms are not restricted in any additional way. We follow the convention that the vertices have unique $O(\log n)$-bit labels, i.e., an instance $\Gr\in\Ff$ of order $n=|V(\Gr)|$ has $V(\Gr)\subseteq \{1,2,\dotsc,s(n)\}$ where $s(n)$ is some fixed polynomial function of $n$. Our presentation assumes $s(n)=\omega(n)$, even though this assumption can often be relaxed as we discuss in Remark~\ref{rem:identifiers}.

\subsection{Model \texorpdfstring{$\boldsymbol\OI$}{OI}}

A local $\OI$-algorithm $\Alg$ does not directly use unique vertex identifiers but only their relative \emph{order}. To make this notion explicit, let the vertices of $\Gr\in\Ff$ be linearly ordered by $<$, and call $(\Gr,\<)$ an \emph{ordered graph}. Denote by $\tau(\Gr,\<,u)$ the restriction of the structure $(\Gr,\<,u)$ to the $r$-neighbourhood $B_{\Gr}(u,r)$, i.e., in symbols,
\[
    \tau(\Gr,\<,u) = (\Gr,\<,u) \upharpoonright B_{\Gr}(u,r).
\]
Then, the output $\Alg(\Gr,\<,u)$ depends only on the \emph{isomorphism type} of $\tau(\Gr,\<,u)$, so that if $\tau(\Gr,\<,u) \allowbreak\simeq \tau(\Gr',\<',u')$ then $\Alg(\Gr,\<,u) = \Alg(\Gr',\<',u')$.

\subsection{Model \texorpdfstring{$\boldsymbol\PO$}{PO}}\label{sec:po-model}

In the $\PO$ model the nodes are considered anonymous and only the following node specific structure is available: a node can communicate with its neighbours through ports numbered $1,2,\dotsc,\deg(u)$, and each communication link has an orientation.

\paragraph{Edge-Labelled Digraphs.}

To model the above, we consider \emph{$L$-edge-labelled directed graphs} (or \emph{$L$-digraphs}, for short) $\Gr=(V(\Gr),E(\Gr),\ell_{\Gr})$, where the edges $E(\Gr)\subseteq V(\Gr)\times V(\Gr)$ are directed and each edge $e\in E(\Gr)$ carries a label $\ell_{\Gr}(e)\in L$. We restrict our considerations to \emph{proper} labellings $\ell_{\Gr}\colon E(\Gr)\to L${} that for each $u\in V(\Gr)$ assign the incoming edges $(v,u)\in E(\Gr)$ distinct labels and the outgoing edges $(u,w)\in E(\Gr)$ distinct labels; we allow $\ell_{\Gr}(v,u)=\ell_{\Gr}(u,w)$. We refer to the outgoing edges of a node by the labels $L$ and to the incoming edges by the formal letters $L^{-1} = \{\ell^{-1} : \ell\in L\}$. In the context of $L$-digraphs, covering maps $\varphi\colon V(\Hr) \to V(\Gr)$ are required to preserve edge labels so that $\ell_{\Hr}(u,v) = \ell_{\Gr}(\varphi(u), \varphi(v))$ for all $(u,v)\in E(\Hr)$.

A port numbering on $\Gr$ gives rise to a proper labelling $\ell_{\Gr}(v,u) = (i,j)$, where $u$ is the $i$th neighbour of $v$, and $v$ is the $j$th neighbour of $u$; see Figure~\ref{fig:ldigraph}. We now fix $L$ to contain every possible edge label that appears when a graph $\Gr\in\Ff$ is assigned a port numbering and an orientation. Note that $|L| \le \Delta^2$.

\begin{figure}
    \centering
    \includegraphics[page=\PLDigraph]{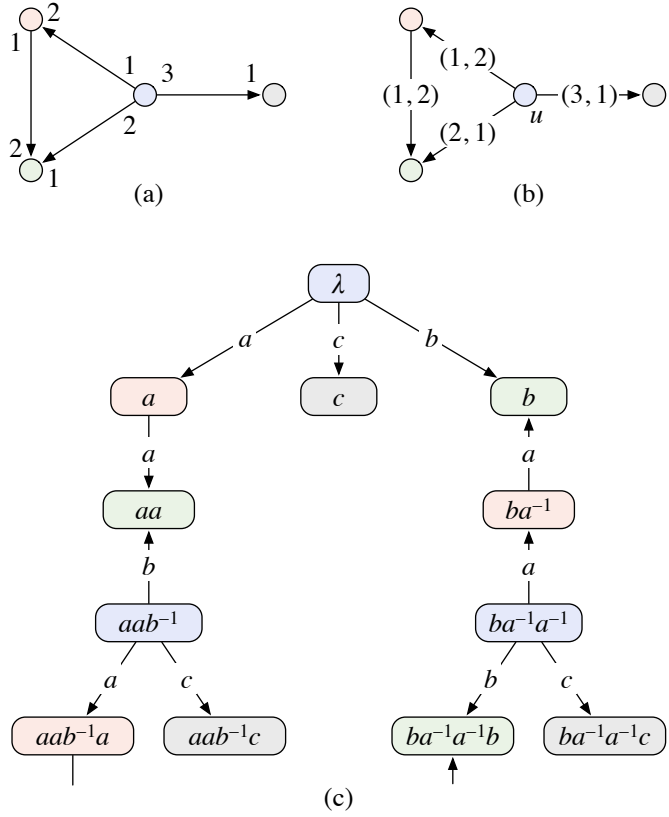}
    \caption{(a)~A graph $\Gr$ with a port numbering and an orientation. (b)~A proper labelling $\ell_{\Gr}$ that is derived from the port numbering. We have an $L$-digraph with $L = \{ a,b,c \}$, $a = (1,2)$, $b = (2,1)$, and $c = (3,1)$. (c)~The view of $\Gr$ from $u$ is an infinite directed tree $\Tr = \Tr(\Gr,u)$; there is a covering map $\varphi$ from $\Tr$ to $\Gr$ that preserves adjacencies, orientations, and edge labels. For example, $\varphi(\lambda) = \varphi(aab^{-1}) = u$.}\label{fig:ldigraph}
\end{figure}

\paragraph{Views.}

The information available to a $\PO$-algorithm computing on a node $u\in V(\Gr)$ in an $L$-digraph $\Gr$ is usually modelled as follows~\cite{angluin80local, yamashita96computing, suomela09survey}. The \emph{view} of $\Gr$ from $u$ is an $L$-edge-labelled rooted (possibly infinite) directed tree $\Tr=\Tr({\Gr},u)$, where the vertices $V(\Tr)$ correspond to all non-backtracking walks on $\Gr$ starting at $u$; see Figure~\ref{fig:ldigraph}c. Formally, a $k$-step walk can be identified with a word of length $k$ in the letters $L\cup L^{-1}$. A non-backtracking walk is a \emph{reduced} word where neither $\ell\ell^{-1}$ nor $\ell^{-1}\ell$ appear. If $w\in V(\Tr)$ is a walk on $\Gr$ from $u$ to $v$, we define $\varphi(w) = v$. In particular, the root of $\Tr$ is the \emph{empty word} $\lambda$ with $\varphi(\lambda) = u$. The directed edges of $\Tr$ (and their labels) are defined in such a way that $\varphi\colon V(\Tr) \to V(\Gr)$ becomes a covering map. Namely, $w\in V(\Tr)$ has an out-neighbour $w\ell$ for every $\ell\in L$ such that $\varphi(w)$ has a outgoing edge labelled $\ell$.

\paragraph{Local $\boldsymbol\PO$-Algorithms.}

The inability of a $\PO$-algorithm $\Blg$ to detect cycles in a graph is characterised by the fact that $\Blg(\Gr,u) = \Blg(\Tr(\Gr,u))$. In fact, we \emph{define} a local $\PO$-algorithm as a function $\Blg$ satisfying $\Blg(\Gr,u)=\Blg(\tau(\Tr(\Gr,u)))$. 
An important consequence of this definition is that the output of a $\PO$-algorithm is invariant under lifts, i.e., if $\varphi\colon V(\Hr)\to V(\Gr)$ is a covering map of $L$-digraphs, then $\Blg(\Hr,u) = \Blg(\Gr,\varphi(u))$. The intuition is that nodes in a common fibre are always in the same state during computation as they see the same view.

The following formalism will become useful. Denote by $(\Tr^*,\lambda)$ the complete $L$-labelled rooted directed tree of radius $r$ with $V(\Tr^*)$ consisting of reduced words in the letters $L\cup L^{-1}$, i.e., every non-leaf vertex in $\Tr^*$ has an outgoing edge and an incoming edge for each $\ell\in L$; see Figure~\ref{fig:complete}. The output of $\Blg$ on every graph $\Gr\in\Ff$ is completely determined after specifying its output on the subtrees of $(\Tr^*,\lambda)$. More precisely, let $\Wf$ consist of vertex sets $W\subseteq V(\Tr^*)$ such that $(\Tr^*,\lambda)\upharpoonright W = \tau(\Tr(\Gr,u))$ for some $\Gr\in\Ff$ and $u\in V(\Gr)$. Then a function $\Blg\colon\Wf \to \Omega$ defines a $\PO$-algorithm by identifying $\Blg((\Tr^*,\lambda)\upharpoonright W) = \Blg(W)$.

\begin{figure}
    \centering
    \includegraphics[page=\PComplete]{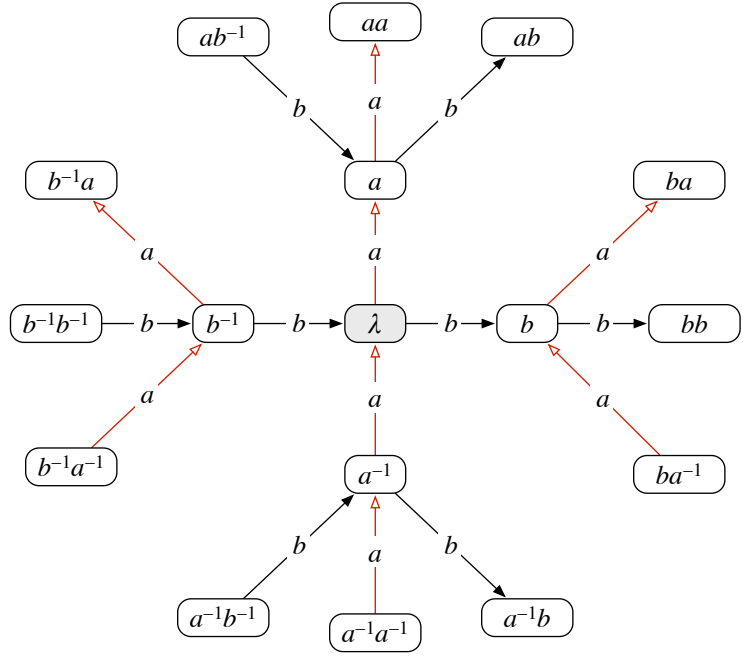}
    \caption{The complete $L$-labelled rooted directed tree $(\Tr^*,\lambda)$ of radius $r = 2$, for $L = \{a,b\}$.}\label{fig:complete}
\end{figure}

\section{Order Homogeneity}

In this section we introduce some key concepts that are used in controlling the local symmetry breaking information that is available to a local $\OI$-algorithm.

\subsection{Homogeneous Graphs}\label{ssec:homog}

In the following, we take the \emph{isomorphism type} of an $r$-neighbourhood $\tau=\tau(\Gr,\<,u)$ to be some canonical representative of the isomorphism class of $\tau$.
\begin{definition}
    Let $(\Hr,\<)$ be an ordered graph. If there is a set $U\subseteq V(\Hr)$ of size $|U| \ge \alpha|\Hr|$ such that the vertices in $U$ have a common $r$-neighbourhood isomorphism type $\tau^*$, then we call $(\Hr,\<)$ an \emph{$(\alpha,r)$-homogeneous graph} and $\tau^*$ the associated \emph{homogeneity type} of $\Hr$.
\end{definition}

Homogeneous graphs are useful in fooling $\OI$-algorithms: an $(\alpha,r)$\hyp homogeneous graph forces any local $\OI$-algorithm to produce the same output in at least an $\alpha$ fraction of the nodes in the input graph. However, there are some limitations to how large $\alpha$ can be: Let $(\Gr,\<)$ be a connected ordered graph on at least two vertices. If $u$ and $v$ are the smallest and the largest vertices of $\Gr$, their $r$-neighbourhoods $\tau(\Gr,\<,u)$ and $\tau(\Gr,\<,v)$ cannot be isomorphic even for $r=1$. Thus, non-trivial finite graphs are not $(1,1)$-homogeneous. Moreover, an ordered $(2k-1)$-regular graph cannot be $(\alpha,1)$-homogeneous for any $\alpha > 1/2$; this is the essence of the weak $2$-colouring algorithm of Naor and Stockmeyer~\cite{naor95what}.

\begin{figure}
    \centering
    \includegraphics[page=\PHomogTree]{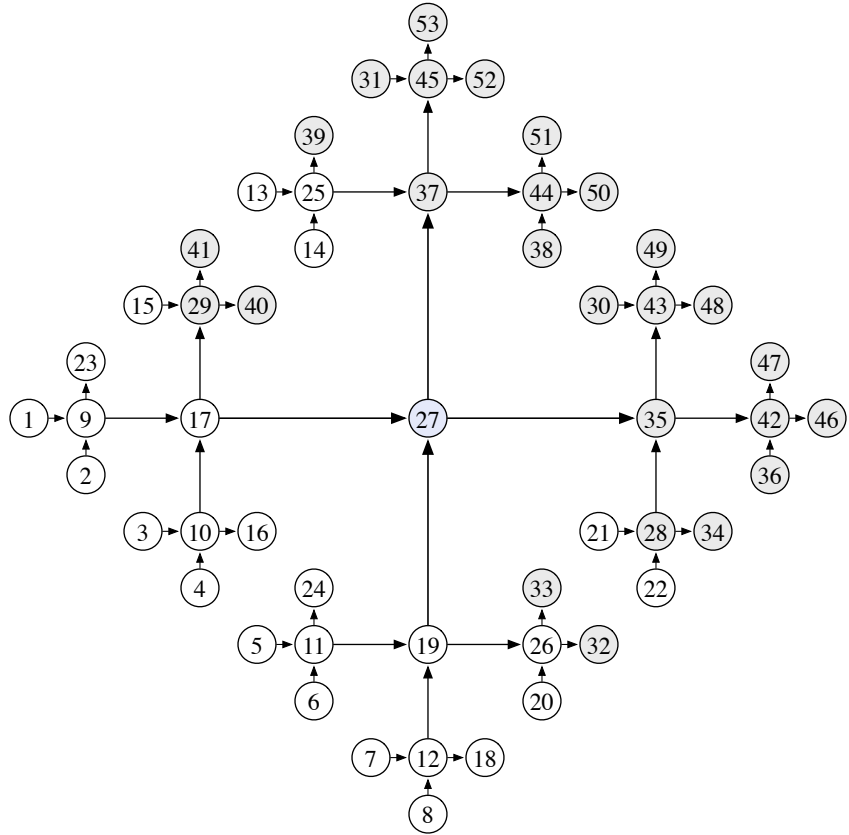}
    \caption{A fragment of a $4$-regular infinite ordered tree $(\Gr,\<)$. The numbering of the nodes indicates a $(1,r)$-homogeneous linear order in the neighbourhood of node $27$; grey nodes are larger than $27$ and white nodes are smaller than $27$.}\label{fig:homog-tree}
\end{figure}

\begin{figure}
    \centering
    \includegraphics[page=\PHomog]{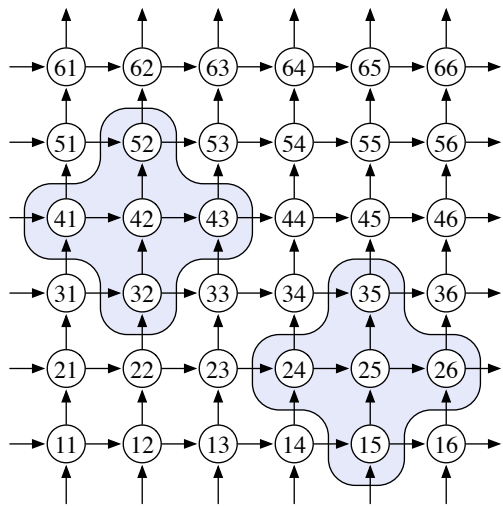}
    \caption{A $4$-regular graph $\Gr$ constructed as the cartesian product of two directed $6$-cycles. We define the ordered graph $(\Gr,\<)$ by choosing the linear order $11 < 12 < \dotsb < 16 < 21 < 22 < \dotsb < 66$. The radius-$1$ neighbourhood of node $25$ is isomorphic to the radius-$1$ neighbourhood of node $42$. In general, there are $16$ nodes (fraction $4/9$ of all nodes) that have isomorphic radius-$1$ neighbourhoods; hence $(\Gr,\<)$ is $(4/9, 1)$-homogeneous. It is also $(1/9, 2)$-homogeneous.}\label{fig:homog}
\end{figure}

Our main technical tool will be a construction of graphs that satisfy the following properties:
\begin{enumerate}[label=(\arabic*),noitemsep,align=left,labelwidth=4ex,leftmargin=8ex]
    \item $(1-\epsilon,r)$-homogeneous for any $\epsilon > 0$ and $r$,
    \item $2k$-regular for any $k$,
    \item large girth,
    \item finite order.
\end{enumerate}
Note that it is relatively easy to satisfy any three of these properties:
\begin{itemize}[align=left,labelwidth=13ex,leftmargin=17ex]
    \item[(1), (2), (3)] Infinite $2k$-regular trees admit a $(1,r)$-homogeneous linear order; see Figure~\ref{fig:homog-tree} for an example.
    \item[(1), (2), (4)] We can construct a sufficiently large $k$-dimensional toroidal grid graph (cartesian product of $k$ directed cycles) and order the nodes lexicographically coordinate-wise; see Figure~\ref{fig:homog} for an example. However, these graphs have girth $4$ when $k \ge 2$.
    \item[(1), (3), (4)] A sufficiently large directed cycle is $(1-\epsilon,r)$-homogeneous and has large girth. However, all the nodes have degree~$2$.
    \item[(2), (3), (4)] It is well known that regular graphs of arbitrarily high girth exist.
\end{itemize}
Our construction satisfies all four properties simultaneously.
\begin{thm}\label{thm:homog-graph}
Let $k,r\in\N$. For every $\epsilon > 0$ there exists a finite $2k$-regular $(1-\epsilon,r)$-homogeneous connected graph $(\Hr_\epsilon,\<_\epsilon)$ of girth larger than $2r+1$. Furthermore, the following properties hold:
\begin{enumerate}
	\item The homogeneity type $\tau^*$ of $(\Hr_\epsilon,\<_\epsilon)$ does not depend on $\epsilon$.
	\item The graph $\Hr_\epsilon$ and the type $\tau^*$ are $k$-edge-labelled digraphs.
\end{enumerate}
\end{thm}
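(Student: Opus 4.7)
My plan is to construct $\Hr_\epsilon$ as a Cayley graph of a finite group and use a projection onto a cyclic factor to define the linear order. First, invoke a classical existence result (e.g.\ Erd\H{o}s--Sachs, or the explicit Lubotzky--Phillips--Sarnak Ramanujan construction) to obtain a finite group $\Gamma$ with a $k$-element generating set $T=\{t_1,\dots,t_k\}$ such that $\Cay(\Gamma,T)$ is $2k$-regular, connected, and of girth exceeding $2r+1$. Orienting the edge $\gamma\to\gamma t_i$ and labelling it $i$ endows $\Cay(\Gamma,T)$ with the structure of a $k$-edge-labelled digraph in the sense of Section~\ref{sec:po-model}.

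Second, for a parameter $N$ to be fixed depending on $\epsilon$, form the product $G=\Gamma\times\Z/N\Z$ with generators $s_i=(t_i,1)$ and set $\Hr_\epsilon:=\Cay(G,\{s_i\})$. Any closed walk in $\Hr_\epsilon$ projects to a closed walk of the same length in $\Cay(\Gamma,T)$, so $\Hr_\epsilon$ is $2k$-regular, connected, inherits the $k$-edge-labelled digraph structure, and has girth exceeding $2r+1$. The second-factor projection $\psi\colon G\to\Z/N\Z$ will drive the ordering.

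Third, identify $\Z/N\Z$ with $\{0,1,\dots,N-1\}$ and define $\<_\epsilon$ lexicographically: primary by $\psi$, ties broken by a fixed linear order $\<_\Gamma$ on $\Gamma$. For $v=(\gamma,n)$ and any reduced word $w$ of length at most $r$, $\psi(vw)=n+\ell(w)\pmod{N}$, where $\ell(w)$ is the signed length of $w$. Call $v$ \emph{interior} if $n\in[r,N-r-1]$; then no wrap-around occurs in $B(v,r)$, so the primary ordering of $B(v,r)$ depends only on the $v$-independent function $w\mapsto\ell(w)$. Choosing $N\ge 2r/\epsilon$ keeps the non-interior fraction below $\epsilon$. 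The homogeneity type $\tau^*$ of a common interior neighbourhood is determined by $(\Gamma,T)$ together with the fixed ordering data---all chosen independently of $\epsilon$---so $\tau^*$ is $\epsilon$-independent as required.

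The chief technical obstacle is secondary tie-breaking. Whenever two distinct reduced words $w_1,w_2$ of length at most $r$ satisfy $\ell(w_1)=\ell(w_2)$, they induce a tie in the primary ordering, and the secondary comparison of $\gamma t_{w_1}$ versus $\gamma t_{w_2}$ in $\<_\Gamma$ a priori depends on $\gamma$. Since any abelian refinement of $\psi$ identifies commutatively equivalent words such as $s_1 s_2$ and $s_2 s_1$, tie-breaking must engage the non-abelian structure of $\Gamma$. My plan is to arrange $\Gamma$ and $\<_\Gamma$ so that a $(1-\epsilon)$-fraction of $\gamma\in\Gamma$ share a common tie pattern across all tied reduced-word pairs of length at most $r$---e.g.\ by taking $\Gamma$ polycyclic with its canonical normal-form order, or by iterating the cyclic extension to promote most ties into the primary ordering. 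This is where I expect the heart of the algebraic work of Section~\ref{sec:homog-graphs} to lie.
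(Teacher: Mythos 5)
Your plan has the right high-level shape---Cayley graph, primary ordering from a ``cyclic'' projection, interior/boundary density argument---and you correctly identify tie-breaking of equal-signed-length words as the crux. But the framework $G=\Gamma\times\Z/N\Z$ cannot be pushed through, and the proposed remedies do not close the gap.

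First, taking $\Gamma$ from an opaque existence theorem (Erd\H{o}s--Sachs, LPS) gives you no structural control, yet the tie-break requires $\Gamma$ to have a very specific internal ordering structure; the two requirements are incompatible. Second, ``iterating the cyclic extension'' within a \emph{direct} product buys nothing: every further $\Z/N\Z$ factor, with generator component $1$, just re-records the signed length $\ell(w)$, so the tied pairs $\{w_1,w_2\}$ stay tied forever. To separate, say, $s_1 s_2$ from $s_2 s_1$ you need a coordinate that sees non-abelian information, i.e.\ a \emph{semi-direct} product extension. Third, and most fundamentally, any finite group has torsion and hence admits no left-invariant linear order, so no choice of $<_\Gamma$ can make the secondary comparison of $\gamma t_{w_1}$ vs.\ $\gamma t_{w_2}$ coherent across a large fraction of $\gamma$ in a way that is internal to $\Gamma$ alone; the coherence has to be borrowed from somewhere infinite.

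That ``somewhere infinite'' is exactly the paper's key move. They build three parallel towers of iterated wreath products $W_{i+1}=W_i^2\rtimes\Z_2$, $H_{i+1}=H_i^2\rtimes\Z_n$, $U_{i+1}=U_i^2\rtimes\Z$, whose underlying sets satisfy $W_i\subseteq H_i\subseteq U_i$. The finite graph is $\Hr=\Cay(H_j,S)$; the girth is imported from $\Cay(W_j,S)$ (a Gamburd et al.\ result on girth of Cayley graphs of iterated wreath products of $\Z_2$---this is why $S$ cannot be an arbitrary high-girth generating set but must live in $W_j$); and the linear order is imported from $U_j$, which is torsion-free and soluble, hence admits a \emph{left-invariant} order via an explicit positive cone. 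Left-invariance makes $\Cay(U_j,S)$ look identically ordered around every vertex, and the covering $\psi\colon U_j\to H_j$ (reduce coordinates mod $n$) is the identity on the $r$-ball of any ``interior'' vertex $u\in[r,n-1-r]^{d}$, so the homogeneity type $\tau^*$ transfers to a $1-\epsilon$ fraction of $\Hr$. The semi-direct product structure is doing double duty that your direct product cannot: the $\Z$-factors peel off a whole hierarchy of nested permutation data (breaking all ties at once, not just length), while the resulting groups have large enough derived length to support large girth. Your instinct about polycyclic normal-form orders is pointing in the right direction, but without (i) the semi-direct/wreath nesting and (ii) the lift to the torsion-free $U_j$, the tie-break step---which you rightly flag as ``the heart of the algebraic work''---remains unproved.
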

We defer the proof of Theorem~\ref{thm:homog-graph} to Section~\ref{sec:homog-graphs}. There, it turns out that Cayley graphs of \emph{soluble} groups suit our needs: The homogeneous toroidal graphs mentioned above are Cayley graphs of the abelian groups $\Z_n^k$. Analogously, we use the decomposition of a soluble group into abelian factors to guarantee the presence of a suitable ordering. However, to ensure large girth, the groups we consider must be sufficiently far from being abelian, i.e., they must have large derived length~\cite{conder10limitations}.

\subsection{Homogeneous Lifts}\label{ssec:homog-lift}

We fix some notation towards a proof of Theorem~\ref{thm:main}. By Theorem~\ref{thm:homog-graph} we let $(\Hr_\epsilon,\<_\epsilon)$, $\epsilon > 0$, be a family of $2|L|$-regular $(1-\epsilon,r)$-homogeneous connected graphs of girth $>2r+1$ interpreted as $L$-digraphs. The homogeneity type $\tau^*$ that is shared by all $\Hr_\epsilon$ is then of the form $\tau^*=(\Tr^*,\<^*,\lambda)$, where $\Tr^*$ is the complete $L$-labelled tree of Section~\ref{sec:po-model}.

We use the graphs $\Hr_\epsilon$ to prove the following theorem.

\begin{thm}\label{thm:subtree}
Let $\Gr$ be an $L$-digraph. For every $\epsilon > 0$ there exists a lift $(\Gr_\epsilon,\<_{\Gr\epsilon})$ of $\Gr$ such that a $(1-\epsilon)$ fraction of the vertices in $(\Gr_\epsilon,\<_{\Gr\epsilon})$ have $r$-neighbourhoods isomorphic to a subtree of $\tau^*=(\Tr^*,\<^*,\lambda)$. Moreover, if $\Gr$ is connected, $\Gr_\epsilon$ can be made connected.
\end{thm}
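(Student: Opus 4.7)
The plan is to construct $\Gr_\epsilon$ as a fibre-product-style lift of $\Gr$ against the homogeneous graph $\Hr_\epsilon$. Define $\Gr_\epsilon$ to have vertex set $V(\Gr)\times V(\Hr_\epsilon)$, placing an $\ell$-labelled edge from $(u,h)$ to $(v,h')$ precisely when $(u,v)\in E(\Gr)$ and $(h,h')\in E(\Hr_\epsilon)$ both carry the label $\ell$. Because $\Hr_\epsilon$ is $2|L|$-regular with a proper $L$-labelling, each $h$ has exactly one $\ell$-successor and one $\ell$-predecessor, so the first-coordinate projection $\varphi\colon (u,h)\mapsto u$ is label-preserving and bijective on each labelled in- and out-neighbourhood; hence $\varphi$ is a covering map and $\Gr_\epsilon$ is a lift of $\Gr$. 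I endow $V(\Gr_\epsilon)$ with the lexicographic order $\<_{\Gr\epsilon}$ in which the $\Hr_\epsilon$-coordinate dominates, breaking ties with any fixed ordering of $V(\Gr)$.

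The key technical step is to show that for every $(u,h)\in V(\Gr_\epsilon)$, the second-coordinate projection $\pi\colon (v,h')\mapsto h'$ restricts to an isomorphism of ordered $L$-digraphs from $B_{\Gr_\epsilon}((u,h),r)$ onto a subtree of $B_{\Hr_\epsilon}(h,r)$ rooted at $h$. Any non-backtracking walk in $\Gr_\epsilon$ projects to a non-backtracking walk with the same word in $\Hr_\epsilon$; since $\Hr_\epsilon$ has girth $>2r+1$, the ball $B_{\Hr_\epsilon}(h,r)$ is a tree, so distinct non-backtracking words of length $\le r$ from $h$ reach distinct vertices. Combined with the fact that starting vertex plus word determines the endpoint in $\Gr_\epsilon$, this makes $\pi$ injective on the ball. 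The image is closed under taking prefixes of non-backtracking walks from $h$, and hence forms a subtree rooted at $h$; edge labels are preserved by construction, and since $\<_{\Gr\epsilon}$ is dominated by the second coordinate while distinct ball-vertices project to distinct $h$-coordinates, the order is preserved as well.

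Whenever $h$ lies in the homogeneous set $U\subseteq V(\Hr_\epsilon)$ of size at least $(1-\epsilon)|V(\Hr_\epsilon)|$, the ordered ball $(B_{\Hr_\epsilon}(h,r),\<_\epsilon,h)$ is isomorphic to $\tau^*=(\Tr^*,\<^*,\lambda)$; composing with $\pi$ exhibits the $r$-neighbourhood of $(u,h)$ in $\Gr_\epsilon$ as isomorphic to a subtree of $\tau^*$ rooted at $\lambda$. The fraction of such vertices in $\Gr_\epsilon$ is $|U|/|V(\Hr_\epsilon)|\ge 1-\epsilon$. For the connected version, note that if $\Gr$ is connected, then by the path-lifting property of covering maps each connected component of $\Gr_\epsilon$ surjects onto $V(\Gr)$ and is itself a connected lift of $\Gr$; averaging the fraction of good vertices across components, at least one component retains a $(1-\epsilon)$ fraction of good vertices, and we redefine $\Gr_\epsilon$ to be such a component. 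The principal obstacle is the injectivity and order/label preservation of $\pi$ on $r$-balls, which rests essentially on the large girth of $\Hr_\epsilon$ together with the rigidity of non-backtracking walks under product projections.
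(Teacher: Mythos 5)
Your construction (fibre product matching equi-labelled edges, ordered lexicographically with the $\Hr_\epsilon$-coordinate dominant, using girth to get injectivity of the projection on $r$-balls, and averaging over components for the connected case) is precisely the paper's argument. The only cosmetic difference is that you argue injectivity via non-backtracking words explicitly, whereas the paper phrases the ordering as "any completion of the partial order pulled back through $\varphi_{\Hr}$" and notes that $\tau^*$ being a tree forces injectivity; these are the same observation.
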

\begin{proof}
Write $(\Cr,\<_{\Cr})=(\Gr_\epsilon,\<_{\Gr\epsilon})$ and $(\Hr,\<_{\Hr}) = (\Hr_\epsilon,\<_\epsilon)$ for short.
Our goal is to construct $(\Cr,\<_{\Cr})$ as a certain product of $(\Hr,\<_{\Hr})$ and $\Gr$; see Figure~\ref{fig:product}. This product is a modification of the common lift construction of Angluin and Gardiner~\cite{angluin81finite}.

\begin{figure}
    \centering
    \includegraphics[page=\PProduct]{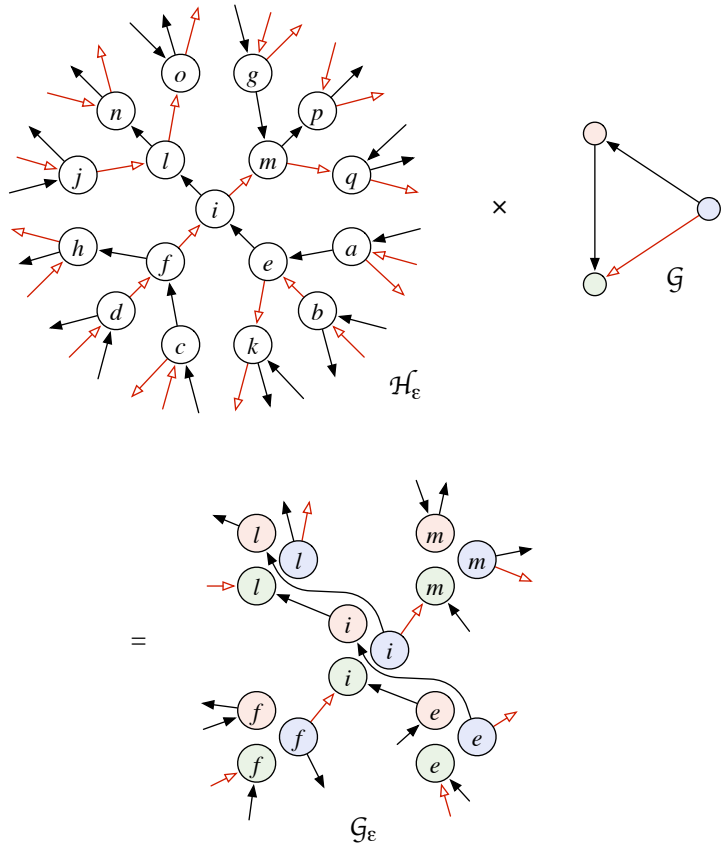}
    \caption{Homogeneous lifts. In this example $L = |2|$, and the two labels are indicated with two different kinds of arrows. Graph $\Hr_\epsilon$ is a homogeneous $2|L|$-regular ordered $L$-digraph with a large girth---in particular, the local neighbourhood of a node looks like a tree. Graph $\Gr$ is an arbitrary $L$-digraph, not necessarily ordered. Their product $\Gr_\epsilon$ is a lift of $\Gr$, but it inherits the desirable properties of $\Hr_\epsilon$: a high girth and a homogeneous linear order.}\label{fig:product}
\end{figure}

The lift $\Cr$ is defined on the product set $V(\Cr) = V(\Hr)\times V(\Gr)$ by ``matching equi-labelled edges'': the out-neighbours of $(h,g)\in V(\Cr)$ are vertices $(h',g')\in V(\Cr)$ such that $(h,h')\in E(\Hr)$, $(g,g')\in E(\Gr)$ and $\ell_\Hr(h,h')=\ell_\Gr(g,g')$. An edge $((h,g),(h',g'))\in E(\Gr)$ inherits the common label $\ell_\Hr(h,h')=\ell_\Gr(g,g')$.

The properties of $\Cr$ are related to the properties of $\Gr$ and $\Hr$ as follows.
\begin{enumerate}
    \item The projection $\varphi_{\Gr}\colon V(\Cr)\to V(\Gr)$ mapping $(h,g)\mapsto g$ is a covering map. This follows from the fact that each edge incident to $g\in V(\Gr)$ is always matched against an edge of $\Hr$ in the fibre $V(\Hr)\times\{g\}$.
    \item The projection $\varphi_{\Hr}\colon V(\Cr)\to V(\Hr)$ mapping $(h,g)\mapsto h$ is not a covering map in case $\Gr$ is not $2|L|$-regular. In any case $\varphi_{\Hr}$ is a graph homomorphism, and this implies that $\Cr$ has girth $> 2r+1$.
\end{enumerate}

Next, we define a partial order $<_p$ on $V(\Cr)$ as $u<_p v \iff \varphi_{\Hr}(u) <_{\Hr} \varphi_{\Hr}(v)$, for $u,v\in V(\Cr)$. Note that this definition leaves only pairs of vertices in a common $\varphi_{\Hr}$-fibre incomparable. But since $\Hr$ has large girth, none of the incomparable pairs appear in an $r$-neighbourhood of $\Cr$. We let $<_{\Cr}$ be any completion of $<_p$ into a linear order. The previous discussion implies that $<_{\Cr}$ satisfies $\tau(\Cr,\<_{\Cr},u)=\tau(\Cr,\<_p,u)$ for all $u\in V(\Cr)$.

Let $U_{\Hr} \subseteq V(\Hr)$, $|U_{\Hr}| \ge (1-\epsilon)|\Hr|$, be the set of type $\tau^*$ vertices in $(\Hr,\<_{\Hr})$. Set $U_{\Cr} = \varphi^{-1}_{\Hr}(U_{\Hr})$ so that $|U_{\Cr}| \ge (1-\epsilon)|\Cr|$. Let $u\in U_{\Cr}$. By our definition of $<_p$, $\varphi_{\Hr}$ maps the $r$-neighbourhood $\tau_u=\tau(\Cr,\<_{\Cr},u)$ into $\tau(\Hr,\<_{\Hr},\varphi_{\Hr}(u))\simeq \tau^*$ while preserving the order. But because $\tau^*$ is a tree, $\varphi_{\Hr}$ must be injective on the vertex set of $\tau_u$ so that $\tau_u$ is isomorphic to a subtree of $\tau^*$ as required.

Finally, suppose $\Gr$ is connected. Then, by averaging, some connected component of $\Cr$ will have vertices in $U_{\Cr}$ with density at least $(1-\epsilon)$. This component satisfies the theorem.
\end{proof}

\section{Proof of Main Theorem}\label{sec:proof-mainthm}

Next, we use the tools of the previous section to prove Theorem~\ref{thm:main}. For clarity of exposition we first prove Theorem~\ref{thm:main} in the special case where $\Alg$ is an $\OI$-algorithm. The subsequent proof for an $\ID$-algorithm $\Alg$ uses a somewhat technical but well-known Ramsey type argument.

\subsection{Proof of Main Theorem for \texorpdfstring{$\boldsymbol\OI$}{OI}-algorithms}\label{ssec:mainthm-oi}

We will prove the general and connected versions of Theorem~\ref{thm:main} simultaneously; for the proof of the connected version it suffices to consider only connected lifts below. We do not need the assumption that $\Ff$ does not contain any trees.

Let $\PI$ be as in the statement of Theorem~\ref{thm:main}. Suppose an $\OI$-algorithm $\Alg$ finds an \Apx{\alpha} of $\PI$ in $\Ff$. We define a $\PO$-algorithm $\Blg$ simply by setting for $W \in \Wf$,
\[
    \Blg(W) = \Alg\bigl((\Tr^*,\<^*,\lambda)\upharpoonright W\bigr).
\]
Now, Theorem~\ref{thm:subtree} translates into saying that for every $\Gr\in\Ff$ and $\epsilon > 0$ we have that $\Alg(\Gr_\epsilon,\<_{\Gr\epsilon},u)=\Blg(\Gr_\epsilon,u)$ for at least a $(1-\epsilon)$ fraction of nodes $u\in V(\Gr_\epsilon)$. The claim that $\Blg$ works as expected follows essentially from this fact as we argue next.

For simplicity, we assume the solutions to $\PI$ are sets of vertices so that $\Alg(\Gr)\subseteq V(\Gr)$; solutions that are sets of edges are handled similarly.

Fix $\Gr\in\Ff$ and let $\varphi_\epsilon\colon V(\Gr_\epsilon)\to V(\Gr)$, $\epsilon > 0$, be the associated covering maps.

\paragraph{Algorithm $\boldsymbol\Blg$ Finds a Feasible Solution of $\boldsymbol\PI$ on $\boldsymbol\Gr$.}

Let $\Vlg$ be a local $\PO$-algorithm verifying the feasibility of a solution for $\PI$; we may assume $\Vlg$ also runs in time $r$. For $\epsilon>0$ sufficiently small, each $v\in V(\Gr)$ has a pre-image $v'\in \varphi_\epsilon^{-1}(v)$ such that $\Alg$ and $\Blg$ agree on the vertices $\bigcup_{v\in V(\Gr)} B_{\Gr\epsilon}(v',r)$. Thus, $\Vlg$ accepts the solution $\Blg(\Gr_\epsilon)$ on the vertices $v'$. But because $\varphi_\epsilon(\{v':v\in V(\Gr)\}) = V(\Gr)$ it follows that $\Vlg$ accepts the solution $\Blg(\Gr) = \varphi_\epsilon(\Blg(\Gr_\epsilon))$ on every node in $\Gr$.

\paragraph{Algorithm $\boldsymbol\Blg$ Finds an $\boldsymbol\alpha$-Approximation of $\boldsymbol\PI$ on $\boldsymbol\Gr$.}

We assume $\PI$ is a minimisation problem; maximisation problems are handled similarly. Let $X\subseteq V(\Gr)$ and $X_\epsilon\subseteq V(\Gr_\epsilon)$ be some optimal solutions of $\PI$.

As $\epsilon \to 0$, the solutions $\Blg(\Gr_\epsilon)$ and $\Alg(\Gr_\epsilon)$ agree on almost all the vertices. Indeed, a simple calculation shows that $|\Blg(\Gr_\epsilon)| \le f(\epsilon)\cdot|\Alg(\Gr_\epsilon)|$ for some $f$ with $f(\epsilon)\to 1$ as $\epsilon \to 0$. Furthermore,
\[
    \frac{|\Blg(\Gr)|}{|X|}
    = \frac{|\varphi_\epsilon^{-1}(\Blg(\Gr))|}{|\varphi_\epsilon^{-1}(X)|}
    \le \frac{|\Blg(\Gr_\epsilon)|}{|X_\epsilon|}
    \le \frac{f(\epsilon)\cdot|\Alg(\Gr_\epsilon)|}{|X_\epsilon|}
    \le f(\epsilon)\alpha,
\]
where the first equality follows from $\varphi_\epsilon$ being an $n$-lift, and
the first inequality follows from $\varphi^{-1}_\epsilon(\Blg(\Gr))=\Blg(\Gr_\epsilon)$ and the fact that $\varphi^{-1}_\epsilon(X)$ is a feasible solution so that $|X_\epsilon| \le |\varphi^{-1}_\epsilon(X)|$. Since the above inequality holds for every $\epsilon > 0$ we must have that $|\Blg(\Gr)|/|X| \le \alpha$, as desired.

\subsection{Proof of Main Theorem for \texorpdfstring{$\boldsymbol\ID$}{ID}-algorithms}\label{ssec:mainthm-id}

We extend the above proof to the case of local $\ID$-algorithms $\Alg$ by designing ``worst-case'' vertex identifiers for the instances in $\Ff$ in order to make $\Alg$ behave similarly to a $\PO$-algorithm on tree neighbourhoods. To do this we use the Ramsey technique of Naor and Stockmeyer~\cite{naor95what}; see also Czygrinow et al.~\cite{czygrinow08fast}. For a reference on Ramsey's theorem see Graham et al.~\cite{graham80ramsey}.

We use the following notation: if $(X,\<_X)$ and $(Y,\<_Y)$ are linearly ordered sets with $|X| \le |Y|$, we write $f\colon (X,\<_X)\hookrightarrow (Y,\<_Y)$ for the unique order-preserving injection $f\colon X\to Y$ that maps the $i$th element of $X$ to the $i$th element of $Y$. A \emph{$t$-set} is a set of size $t$, and the set of $t$-subsets of $X$ is denoted $X^{(t)}$.

Write $\Omega^\Wf$ for the family of functions $\Wf\to\Omega$; recall that each $\Blg\in \Omega^\Wf$ can be interpreted as a $\PO$-algorithm. Set $k=|\Omega^\Wf|$ and $t=|\Tr^*|$. We consider every $t$-subset $A\in\N^{(t)}$ to be ordered by the usual order $<$ on $\N$. For $W\in\Wf$ we let $f_{W,A}\colon (W,\<^*)\hookrightarrow(A,\<)$ so that the vertex-relabelled tree $f_{W,A}((\Tr^*,\lambda)\upharpoonright W)$ has the $|W|$ smallest numbers in $A$ as vertices. Define a $k$-colouring $c\colon \N^{(t)}\to\Omega^\Wf$ by setting
\[
    c(A)(W) = \Alg(f_{W,A}((\Tr^*,\lambda)\upharpoonright W)). 
\]

For each $m \ge t$ we can use Ramsey's theorem to obtain a number $R(m) \ge m$, so that for every $R(m)$-set $I\subseteq\N$ there exists an $m$-subset $J\subseteq I$ such that $J^{(t)}$ is monochromatic under $c$, i.e., all $t$-subsets of $J$ have the same colour. In particular, for every interval
\[
    I(m,i)=[(i-1) R(m)+1,\, i R(m)], \quad i \ge 1,
\]    
there exist an $m$-subset $J(m,i)\subseteq I(m,i)$ and a colour (i.e., an algorithm) $\Blg_{m,i} \in \Omega^\Wf$ such that $c(A) = \Blg_{m,i}$ for all $t$-subsets $A \subseteq J(m,i)$.

This construction has the following property.
\begin{prop}\label{prop:agreement}
Suppose $m \ge |\Gr_\epsilon|+t$. Algorithms $\Alg$ and $\Blg_{m,i}$ produce the same output on at least a  $(1-\epsilon)$ fraction of the vertices in the vertex-relabelled $L$-digraph $f_{m,i}(\Gr_\epsilon)$, where 
\[
    f_{m,i}\colon (V(\Gr_\epsilon),\<_{\Gr\epsilon})\hookrightarrow(J(m,i),\<).
\]
\end{prop}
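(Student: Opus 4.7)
The plan is to leverage Theorem~\ref{thm:subtree} to focus attention on the $(1-\epsilon)$ fraction of vertices $u\in V(\Gr_\epsilon)$ whose $r$-neighbourhood $\tau(\Gr_\epsilon,\<_{\Gr\epsilon},u)$ is order-isomorphic to $(\Tr^*,\<^*,\lambda)\upharpoonright W_u$ for some $W_u\in\Wf$, and to show that for each such $u$ one has $\Alg(f_{m,i}(\Gr_\epsilon),f_{m,i}(u))=\Blg_{m,i}(\Gr_\epsilon,u)$. Because $\Blg_{m,i}$ is a $\PO$-algorithm and because $\Gr_\epsilon$ has girth $>2r+1$, its output at such a tree-like $u$ is determined by the neighbourhood view and equals $\Blg_{m,i}(W_u)$; the identifiers produced by $f_{m,i}$ are irrelevant to $\Blg_{m,i}$. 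So the task reduces to matching $\Alg$'s output at the relabelled vertex $f_{m,i}(u)$ with the Ramsey colour value $c(A)(W_u)=\Blg_{m,i}(W_u)$ for a suitably chosen $t$-set $A\subseteq J(m,i)$.

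To construct $A$, I would first set $A'=f_{m,i}(B_{\Gr_\epsilon}(u,r))\subseteq J(m,i)$. Since $f_{m,i}\colon (V(\Gr_\epsilon),\<_{\Gr\epsilon})\hookrightarrow(J(m,i),\<)$ is the order-preserving injection, its image is exactly the $|V(\Gr_\epsilon)|$ smallest elements of $J(m,i)$, and in particular $\max A'$ is bounded above by the $|V(\Gr_\epsilon)|$-th smallest element of $J(m,i)$. The hypothesis $m\ge |\Gr_\epsilon|+t$ then guarantees the existence of at least $t-|A'|$ elements of $J(m,i)$ strictly greater than $\max A'$, and I pad $A'$ with any such elements to obtain a $t$-set $A\subseteq J(m,i)$ in which $A'$ is precisely the block of the $|W_u|$ smallest members of $A$.

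The Ramsey monochromaticity of $J(m,i)^{(t)}$ under $c$ now yields $c(A)=\Blg_{m,i}$, and unwinding the definition gives $\Blg_{m,i}(W_u)=\Alg(f_{W_u,A}((\Tr^*,\lambda)\upharpoonright W_u))$. By the placement of $A'$ inside $A$, the unique order-preserving injection $f_{W_u,A}\colon (W_u,\<^*)\hookrightarrow(A,\<)$ has image exactly $A'$, and composing with the order-isomorphism $W_u\simeq B_{\Gr_\epsilon}(u,r)$ supplied by Theorem~\ref{thm:subtree} shows that the relabelled tree $f_{W_u,A}((\Tr^*,\lambda)\upharpoonright W_u)$ is isomorphic, as a vertex-labelled $L$-digraph, to the $r$-neighbourhood of $f_{m,i}(u)$ in $f_{m,i}(\Gr_\epsilon)$. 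Since $\Alg$ depends only on this isomorphism type, its value at $f_{m,i}(u)$ agrees with $\Blg_{m,i}(W_u)=\Blg_{m,i}(\Gr_\epsilon,u)$, which is the desired pointwise agreement.

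The main obstacle is bookkeeping rather than insight: one must verify that the identifiers appearing in any $r$-ball of $f_{m,i}(\Gr_\epsilon)$ can be extended inside $J(m,i)$ to a $t$-set of the canonical shape used in the definition of $c$, which is exactly where the slack $m\ge|V(\Gr_\epsilon)|+t$ is consumed, and one must confirm that the $\ID$-labelling seen by $\Alg$ around $u$ is the image under the canonical $f_{W_u,A}$ of the subtree isomorphism type supplied by Theorem~\ref{thm:subtree}. Once these two alignments are in place, summing over the tree-like vertices immediately yields the $(1-\epsilon)$ fraction agreement.
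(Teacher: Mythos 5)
Your proof is correct and follows essentially the same route as the paper's: restrict to the $(1-\epsilon)$ fraction of tree-like vertices given by Theorem~\ref{thm:subtree}, read off $W_u$ from the order-isomorphism, find a $t$-set $A\subseteq J(m,i)$ whose $|W_u|$ smallest elements are exactly the identifiers in the $r$-ball of $f_{m,i}(u)$, and invoke monochromaticity of $J(m,i)^{(t)}$ under $c$. The paper compresses your padding step into the phrase ``as $m$ is large''; your version spells out exactly how the hypothesis $m \ge |\Gr_\epsilon| + t$ is consumed, which is a useful elaboration but not a different argument.
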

\begin{proof}
By Theorem~\ref{thm:subtree}, let $U\subseteq V(f_{m,i}(\Gr_\epsilon))$, $|U| \ge (1-\epsilon)|\Gr_\epsilon|$, be the set of vertices $u$ with $\tau(f_{m,i}(\Gr_\epsilon),\<,u)$ isomorphic to a subtree of $\tau^*$. In particular, for a fixed  $u\in U$ we can choose $W\in \Wf$ such that
\[
    \tau(f_{m,i}(\Gr_\epsilon),\<,u) \simeq (\Tr^*,\<^*,\lambda)\upharpoonright W.
\]
Now, as $m$ is large, there exists a $t$-set $A\subseteq J(m,i)$ such that
\[
    \tau(f_{m,i}(\Gr_\epsilon),u) = f_{W,A}((\Tr^*,\lambda)\upharpoonright W).
\]
Thus, $\Alg$ and $\Blg_{m,i}$ agree on $u$ by the definition of $\Blg_{m,i}$.
\end{proof}

For every $n\in\N$ some colour appears with density at least $1/k$ (i.e., appears at least $n/k$ times) in the sequence $\Blg_{m,1},\Blg_{m,2},\dotsc,\Blg_{m,n}$. Hence, let $\Blg_m$ be a colour that appears with density at least $1/k$ among these sequences for infinitely many $n$. Let $\Blg$ be a colour appearing among the $\Blg_m$ for infinitely many $m$. We claim $\Blg$ satisfies Theorem~\ref{thm:main}. In fact, Theorem~\ref{thm:main} follows from the following proposition together with the considerations of Section~\ref{ssec:mainthm-oi}.
\begin{prop}
For every $\Gr_\epsilon$ there exists an $n$-lift $\Hr$ of $\Gr_\epsilon$ such that $V(\Hr)\subseteq\{1,2,\dotsc,s(|\Hr|)\}$ and $\Alg(\Hr,u) = \Blg(\Hr,u)$ for a $(1-\epsilon)$ fraction of nodes $u\in V(\Hr)$. Moreover, if $\Gr_\epsilon$ is connected and not a tree, $\Hr$ can be made connected.
\end{prop}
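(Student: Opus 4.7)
The plan is to arrange the identifiers on $\Hr$ so that the $r$-neighbourhood of most vertices sees identifiers drawn from a single Ramsey-monochromatic block $J(m,i)$ with $\Blg_{m,i}=\Blg$; Proposition~\ref{prop:agreement} then immediately yields $\Alg(\Hr,u)=\Blg(\Hr,u)$ at such vertices. First I would pick $m\ge|\Gr_\epsilon|+t$ with $\Blg_m=\Blg$ and then $n$ so large that (i) $S=\{i\in[n]:\Blg_{m,i}=\Blg\}$ has $|S|\ge n/k$ and (ii) $nR(m)\le s(|S|\cdot|\Gr_\epsilon|)$; the latter is possible because $s(\cdot)=\omega(\cdot)$.

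For the general version I would take $\Hr$ to be the disjoint union of the relabelled copies $f_{m,i}(\Gr_\epsilon)$, $i\in S$. Disjointness of the intervals $I(m,i)$ makes the labelling globally injective and contained in $[1,s(|\Hr|)]$; the projection to $V(\Gr_\epsilon)$ is a covering map of $L$-digraphs, so $\Hr$ is an $|S|$-lift of $\Gr_\epsilon$; and Proposition~\ref{prop:agreement} applied copy by copy gives agreement of $\Alg$ with $\Blg_{m,i}=\Blg$ on a $(1-\epsilon)$-fraction of $V(\Hr)$.

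The connected version is the main obstacle, since the disjoint-union trick fails once a connected lift is required. The plan is to replace the disjoint union by a ``twisted'' connected $|S|$-lift whose individual sheets still look, locally, like separate copies of $\Gr_\epsilon$. First I would apply Theorem~\ref{thm:subtree} once more (with a tiny $\epsilon'$) to replace $\Gr_\epsilon$ by a connected lift of itself of arbitrarily large order; this is still a lift of $\Gr$ and still satisfies the hypotheses needed by Proposition~\ref{prop:agreement}, so we may assume $|\Gr_\epsilon|$ is as large as we wish. Since $\Gr_\epsilon$ is not a tree, I pick a non-tree edge $e$ and construct $\Hr$ as the permutation-voltage $|S|$-lift of $\Gr_\epsilon$ with identity voltage on every edge except $e$, on which I put the cyclic permutation $(1\,2\,\cdots\,|S|)$: this makes $\Hr$ connected, and each sheet $V(\Gr_\epsilon)\times\{j\}$ is a copy of $\Gr_\epsilon$ with only the lift of $e$ re-routed to the next sheet. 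Relabel sheet $j$ by $f_{m,\sigma(j)}$ for a bijection $\sigma\colon[|S|]\to S$; labels remain globally injective. For any vertex $(g,j)$ with $g$ at $\Gr_\epsilon$-distance greater than $r$ from both endpoints of $e$, the $r$-ball around $(g,j)$ in $\Hr$ lies entirely inside sheet $j$ and coincides, as a labelled $L$-digraph, with the corresponding $r$-ball in $f_{m,\sigma(j)}(\Gr_\epsilon)$, so Proposition~\ref{prop:agreement} applies verbatim. The vertices excluded by being too close to $e$ form at most a $|B_e|/|\Gr_\epsilon|\le 2\Delta^{r+1}/|\Gr_\epsilon|$ fraction of $V(\Hr)$, which we keep below $\epsilon/2$ by the preliminary enlargement; the remaining $\epsilon/2$ is absorbed by Proposition~\ref{prop:agreement}, giving agreement on at least a $(1-\epsilon)$-fraction of $V(\Hr)$, as required.
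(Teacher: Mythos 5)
Your proposal matches the paper's proof essentially step for step: the disjoint union $\bigcup_{i\in I} f_{m,i}(\Gr_\epsilon)$ for the general version, and a cyclic-permutation rerouting of the lifts of a single non-tree edge $e$ to obtain a connected $n$-lift. The only difference is that where the paper states ``we may assume'' that a $(1-\epsilon)$ fraction of vertices have $r$-neighbourhoods avoiding $e$ and isomorphic into $\tau^*$, you make this explicit by first enlarging $\Gr_\epsilon$ via another application of Theorem~\ref{thm:subtree}, which is a reasonable and slightly more careful way to justify that same step.
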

\begin{proof}
Let $m$ be such that $m \ge |\Gr_\epsilon|+t$ and $\Blg = \Blg_m$. For infinitely many $n$ there exists an $n$-set $I\subseteq[nk]$ of indices such that $\Blg = \Blg_{m,i}$ for $i\in I$. Consider the following $n$-lift of $\Gr_\epsilon$ obtained by taking disjoint unions:
\[
    \Hr = \bigcup_{i\in I} f_{m,i}(\Gr_\epsilon).
\]
Algorithms $\Alg$ and $\Blg$ agree on a $(1-\epsilon)$ fraction of the nodes in $\Hr$ by Proposition~\ref{prop:agreement}. Furthermore, we have $|\Hr|=n|\Gr_\epsilon|$ and $V(\Hr)\subseteq\{1,2,\dotsc,n k R(m)\}$. We are assuming that $s(n)=\omega(n)$ so choosing a large enough $n$ proves the non-connected version of the claim.

Finally, suppose $\Gr_\epsilon$ is connected and not a tree. We may assume that there is an edge $e=(u,v)\in E(\Gr_\epsilon)$ so that $\Gr_\epsilon$ remains connected when $e$ is removed and that a $(1-\epsilon)$ fraction of vertices in $\Gr_\epsilon$ have $r$-neighbourhoods not containing $e$ that are isomorphic into $\tau^*$. Now $\Hr$ above is easily modified into a connected graph by redefining the directed matching between the fibre $\{u_i\}_{i\in I}$ of $u$ and the fibre $\{v_i\}_{i\in I}$ of $v$. Namely, let $\pi$ be a cyclic permutation on $I$ and set
\[
    E' = \bigl(E(\Hr) \smallsetminus \{(u_i,v_i)\}_{i\in I} \bigr) \,\cup\, \{(u_i,v_{\pi(i)})\}_{i\in I}.
\]
Then $\Hr'=(V(\Hr), E')$ is easily seen to be a connected $n$-lift of $\Gr_\epsilon$ satisfying the claim.
\end{proof}

\begin{remark}\label{rem:identifiers}
Above, we assumed that instances $\Gr$ have node identifiers $V(\Gr)\subseteq \{1,2,\dotsc,s(n)\}$, $n=|\Gr|$, for $s(n)=\omega(n)$. By choosing identifiers more economically as in the work of Czygrinow et al.~\cite{czygrinow08fast} one can show lower bounds for the graph problems of Section~\ref{ssec:intro-local-apx} even when $s(n)=n$.
\end{remark}

\section{Construction of Homogeneous Graphs of Large Girth}\label{sec:homog-graphs}

In this section we prove Theorem~\ref{thm:homog-graph}. Our construction uses Cayley graphs of semi-direct products of groups. First, we recall the terminology in use here; for a standard reference on group theory see, e.g., Rotman~\cite{rotman95introduction}.

For the benefit of the reader who is not well-versed in group theory we include in Appendix~\ref{app:wreath} a short primer on the semi-direct product groups that are used below.

\subsection{Semi-Direct Products}

Let $G$ and $H$ be groups with $H$ acting on $G$ as a group of automorphisms. We write $h\cdot g$ for the action of $h\in H$ on $g\in G$ so that the mapping $g\mapsto h\cdot g$ is an automorphism of $G$. The \emph{semi-direct product} $G\rtimes H$ is defined to be the set $G\times H$ with the group operation given by
\[
    (g,h)(g',h') = (g(h\cdot g'),hh').
\]

\subsection{Cayley Graphs}

The \emph{Cayley graph} $\Cay(G,S)$ of a group $G$ with respect to a finite set $S\subseteq G$ is an $S$-digraph on the vertex set $G$ such that each $g\in G$ has an outgoing edge $(g,gs)$ labelled $s$ for each $s\in S$. We require that $1\notin S$ so as not to have any self-loops. We do not require that $S$ is a generating set for $G$, i.e., the graph $\Cay(G,S)$ need not be connected. 

If $\varphi\colon H\to G$ is an onto group homomorphism and $S\subseteq H$ is a set such that the mapping $\varphi$ is injective on $S\cup\{1\}$, then $\varphi$ naturally induces a covering map of digraphs $\Cay(H,S)$ and $\Cay(G,\varphi(S))$.

\subsection{Proof of Theorem~\texorpdfstring{\ref{thm:homog-graph}}{3}}\label{ssec:proof-homog-graph}

Let $n\in\N$ be an even number. We consider three families of groups, $\{H_i\}_{i \ge 1}$, $\{W_i\}_{i \ge 1}$, and $\{U_i\}_{i \ge 1}$, that are variations on a common theme. The families are defined iteratively as follows:
\begin{align*}
H_1 &= \Z_n, &
W_1 &= \Z_2, &
U_1 &= \Z, \\
H_{i+1} &= H_i^2 \rtimes \Z_n, &
W_{i+1} &= W_i^2 \rtimes \Z_2, &
U_{i+1} &= U_i^2 \rtimes \Z.
\end{align*}
Here, the cyclic group $\Z_n=\{0,1,\dotsc,n-1\}$ acts on the direct product $H_i^2 = H_i\times H_i$ by cyclically permuting the coordinates, i.e., the subgroup $2\Z_n \le \Z_n$ acts trivially and the elements in $1+ 2\Z_n$ swap the two coordinates. The groups $\Z_2$ and $\Z$ act analogously in the definitions of $W_i$ and $U_i$. See Appendix~\ref{app:wreath} for more information on groups $H_i$, $W_i$, and $U_i$.

The underlying sets of the groups $H_i$, $W_i$, and $U_i$ consist of $d(i)$-tuples of elements in $\Z$, for $d(i)=2^i-1$, so that $W_i\subseteq H_i\subseteq U_i$ \emph{as sets}. Interpreting these tuples as points in $\R^{d(i)}$ we immediately get a natural embedding of every Cayley graph of these groups in $\R^{d(i)}$. This geometric intuition will become useful later.

\begin{enumerate}
    \item The groups $W_i$ are $i$-fold iterated regular wreath products of the cyclic group $\Z_2$. These groups have order $|W_i|=2^{d(i)}$ and they are sometimes called \emph{symmetric $2$-groups}; they are isomorphic to the Sylow $2$-subgroups of the symmetric group on $2^i$ letters~\cite[p.\ 176]{rotman95introduction}.
    \item The groups $U_i$ are natural extensions of the groups $W_i$ by the free abelian group of rank $d(i)$: the mapping $\varphi_i\colon U_i\to W_i$ that reduces each coordinate modulo $2$ is easily seen to be an onto homomorphism with abelian kernel $(2\Z)^{d(i)} \simeq \Z^{d(i)}$.
    \item The groups $H_i$ are intermediate between $U_i$ and $W_i$ in that the mapping $\psi_i\colon U_i\to H_i$ that reduces each coordinate modulo $n$ is an onto homomorphism, and the mapping $\varphi_i'\colon H_i\to W_i$ that
reduces each coordinate modulo $2$ is an onto homomorphism. In summary, the following diagram commutes:
    \[
        \xymatrix{
            U_i \ar[r]^{\psi_i} \ar[rd]_{\varphi_i} &
            H_i \ar[d]^{\varphi_i'} \\
            & W_i
        }
    \]
\end{enumerate}

Our goal will be to construct a suitable Cayley graph $\Hr$ of some $H_i$. We will use the groups $W_i$ to ensure $\Hr$ has large girth, whereas the groups $U_i$ will guarantee that $\Hr$ has an almost-everywhere homogeneous linear ordering.

\paragraph{Girth.}

Gamburd et al.~\cite{gamburd09girth} study the girth of random Cayley graphs and prove, in particular, that a random $k$-subset of $W_i$ generates a Cayley graph of large girth with high probability when $i\gg k$ is large. We only need the following weaker version of their theorem (see Appendix~\ref{app:high-girth} for an alternative, constructive proof).

\begin{thm}[{Corollary to~\cite[Theorem~6]{gamburd09girth}}]\label{thm:high-girth} Let $k,r\in \N$.
There exists an $i\in\N$ and a set $S\subseteq W_i$, $|S|=k$, such that the girth of the Cayley graph $\Cay(W_i,S)$ is larger than $2r+1$.
\end{thm}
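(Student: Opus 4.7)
The plan is to reformulate the girth requirement as an algebraic condition on evaluations of reduced words, invoke Gamburd et al.'s bound on the density of word-map fibres in $W_i$, and then extract a specific good $k$-subset by a union-bound argument. Since the statement is advertised as a corollary of~\cite{gamburd09girth}, the bulk of the work lies in this extraction rather than in new group-theoretic content.

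First, I would observe that a non-backtracking closed walk of length $\ell \ge 3$ in $\Cay(W_i, S)$ based at the identity is the same data as a reduced word $s_{j_1}^{\epsilon_1} s_{j_2}^{\epsilon_2} \cdots s_{j_\ell}^{\epsilon_\ell}$ in the letters of $S \cup S^{-1}$ that evaluates to $1_{W_i}$. Hence the condition ``$\Cay(W_i, S)$ has girth larger than $2r+1$'' is equivalent to the assertion that, for every non-trivial reduced word $w(x_1, \ldots, x_k) \in F_k$ of length at most $2r+1$, the evaluation $w(s_1, \ldots, s_k)$ is not equal to the identity in $W_i$. The set of non-trivial reduced words of length at most $2r+1$ over $k$ free generators is finite; let $N = N(k, r)$ be its cardinality.

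Next, for each such word $w$, I would consider the word map $\hat{w}\colon W_i^k \to W_i$, $(s_1, \ldots, s_k) \mapsto w(s_1, \ldots, s_k)$, and the density $p_i(w) = |\hat{w}^{-1}(1)|/|W_i|^k$ of $k$-tuples on which $w$ vanishes. Exploiting the iterated wreath-product structure $W_{i+1} = W_i \wr \Z_2$, Gamburd et al.\ show that $p_i(w) \to 0$ as $i \to \infty$ for every fixed non-trivial reduced $w$; this is essentially the content of~\cite[Theorem~6]{gamburd09girth} once it is rephrased in probabilistic language. Granting this, I would choose $i$ large enough that $\sum_w p_i(w) + q_i < 1$, where the sum runs over the $N$ bad words and $q_i$ denotes the density of $k$-tuples with a repeated coordinate or a coordinate equal to the identity (both negligible since $|W_i| \to \infty$). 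A union bound then produces a $k$-tuple $(s_1, \ldots, s_k)$ of pairwise-distinct non-identity elements of $W_i$ that avoids every bad relation; setting $S = \{s_1, \ldots, s_k\}$ gives the required generating set.

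The main obstacle lies inside the cited fibre bound $p_i(w) \to 0$ rather than in the extraction; from the corollary perspective the only genuinely non-trivial checks are that reducedness of the graph-theoretic walk really corresponds to reducedness of the algebraic word (so that the girth is captured exactly), and that the cosmetic constraint $|S| = k$ can be honoured by forcing distinctness, as above. I also expect that an alternative constructive route (as hinted at in Appendix~\ref{app:high-girth}) would replace the random choice by an explicit lift of carefully chosen generators of $W_i$ to $U_i$, using the free-abelian kernel of $\varphi_i\colon U_i \to W_i$ to rule out short relations; but the probabilistic route above is the quickest way to turn~\cite{gamburd09girth} into the corollary stated here.
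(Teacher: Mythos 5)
Your proposal is correct, and it is precisely the argument the paper gestures at by labelling the result a ``Corollary to~\cite[Theorem~6]{gamburd09girth}'': the girth condition is recast as non-vanishing of short reduced words, the cited estimate gives $p_i(w)\to 0$ for each fixed non-trivial reduced $w$, and a union bound over the finitely many words of length at most $2r+1$ (together with the degenerate tuples you fold into $q_i$) extracts a $k$-set of distinct non-identity elements once $i$ is large enough. The paper itself does not spell out these steps but simply invokes Gamburd et al.\ as a black box. It does, however, supply a fully worked-out \emph{alternative} proof in Appendix~\ref{app:high-girth}, which you only mention in passing and which is genuinely different in spirit. That proof is explicit and elementary: it builds sets $S_1,S_2,\ldots,S_g$ inductively in a tower of iterated wreath products, forming each $L_j\in S_i$ by gluing two generators $\gamma_j,\delta_j$ of $S_{i-1}$ into the two slots of the wreath product and adding a coordinate-swap $X(j;n(i))$. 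The inductive invariant is that every $(i-1)$-complex reduced word in $S_i$ is non-vanishing, where ``complexity'' means length minus the number of distinct letters; the crucial observation is that projecting onto a suitable coordinate turns a word with a repeated letter $L_k$ into one in which the two occurrences land in the two wreath slots as the distinct letters $\gamma_k$ and $\delta_k$, strictly decreasing complexity and enabling the induction. Your probabilistic route is shorter but outsources the hard estimate to~\cite{gamburd09girth}; the paper's constructive route is self-contained and deterministic, at the cost of more bookkeeping and a quantitatively weaker girth-versus-order tradeoff, which is nevertheless ample for the application.
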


Fix a large enough $j\in\N$ and a $k$-set $S\subseteq W_j$ so that $\Cay(W_j,S)$ has a girth larger than $2r+1$. Henceforth, we omit the subscript $j$ and write $H$, $W$, $U$, $\varphi$, $\psi$ and $d$ in place of $H_{j}$, $W_{j}$, $U_{j}$, $\varphi_{j}$, $\psi_{j}$ and $d(j)$. Interpreting $S$ as a set of elements of $H$ and $U$ (so that $\varphi(S)=\psi(S)=S$) we construct the Cayley graphs
\[
    \Hr=\Cay(H,S), \quad
    \Wr = \Cay(W,S), \quad\text{and}\quad
    \Ur = \Cay(U,S). 
\]
As each of these graphs is a lift of $\Wr$, none have cycles of length at most $2r+1$ and their $r$-neighbourhoods are trees.

\paragraph{Linear Order.}

Next, we introduce a \emph{left-invariant} linear order $<$ on $U$ satisfying
\[
u< v \implies wu < wv,  \qquad \text{for all } u,v,w\in U.
\]
Such a relation can be defined by specifying a \emph{positive cone} $P\subseteq U$ of elements that are greater than the identity $1=1_{U}$ so that
\[
    u < v \iff 1 < u^{-1}v \iff u^{-1}v \in P.
\]
A relation $<$ defined this way is automatically left-invariant; it is transitive iff $u,v\in P$ implies $uv\in P$; and every pair $u\neq v$ is comparable iff for all $w\neq 1$, either $w\in P$ or $w^{-1}\in P$. The existence of a $P$ satisfying these conditions follows from the fact that $U$ is a torsion-free soluble group (e.g.,~\cite{conrad59right}), but it is easy enough to verify that setting 
\begin{equation}\label{eq:pos-cone}
    P = \bigl\{ (u_1,u_2,\dotsc,u_i, 0, 0, \dotsc, 0) \in U : 1 \le i \le d \text{ and } u_i > 0 \bigr\}
\end{equation}
satisfies the required conditions above (see Appendix~\ref{app:pos-cone}).

Because $U$ acts (by multiplication on the left) on $\Ur$ as a vertex-transitive group of graph automorphisms, it follows that the structures $(\Ur,\<,u)$, $u\in U$, are pairwise isomorphic. A fortiori, the $r$-neighbourhoods $\tau(\Ur,\<,u)$, $u\in U$, are all pairwise isomorphic. Let $\tau^*$ be this common $r$-neighbourhood isomorphism type.

\paragraph{Transferring the Linear Order on \texorpdfstring{$\boldsymbol U$}{U} to \texorpdfstring{$\boldsymbol\Hr$}{H}.}

Let $V(\Hr)$ be ordered by restricting the order $<$ on $U$ to the set $V(\Hr)=\Z_n^d$ underlying the group $H$. Note that $<$ is not a left-invariant order on $H$ (indeed, no non-trivial finite group can be left-invariantly ordered). Nevertheless, we will argue that, as $n\to\infty$, almost all $u\in V(\Hr)$ have $r$-neighbourhoods of type $\tau^*$.

The neighbours of a vertex $u\in V(\Ur)$ are elements $us$ where $s\in S\cup S^{-1}\subseteq [-1,1]^d$. The right multiplication action of $s\in S\cup S^{-1}$ on $u$ can be described in two steps as follows: First, the coordinates of $s$ are permuted (as determined by $u$) to obtain a vector $s'$. Then, $us$ is given as the standard addition of the vectors $u$ and $s'$ in $\Z^d\subseteq\R^d$. Hence, $us \in u+[-1,1]^d$, and moreover,
\begin{equation}\label{eq:bur}
    B_{\Ur}(u,r) \subseteq u+[-r,r]^d.
\end{equation}
This means that vertices close to $u$ in the graph $\Ur$ are also close in the associated geometric $\R^d$-embedding. 

Consider the set of inner nodes $I=[r,(n-1)-r]^d$. Let $u\in I$. By \eqref{eq:bur}, the vertex set $B_{\Ur}(u,r)$ is contained in $\Z_n^d$. This implies that the cover map $\psi$ is the identity on $B_{\Ur}(u,r)$ and consequently the $r$-neighbourhood $\tau(\Hr,\<,u)$ \emph{contains} the ordered tree $\tau(\Ur,\<,u)\simeq\tau^*$. If $\tau(\Hr,\<,u)$ had any additional edges to those of $\tau(\Ur,\<,u)$, this would entail a cycle of length $\le 2r+1$ in $\Hr$, which is not possible. Thus, $\tau(\Hr,\<,u)\simeq \tau^*$. The density of elements in $\Hr$ having $r$-neighbourhood type $\tau^*$ is therefore at least $|I|/|\Hr| =(n-2r)^d/n^d \ge 1-\epsilon$, for large $n$.

Finally, to establish Theorem~\ref{thm:homog-graph} it remains to address $\Hr$'s connectedness. But if $\Hr$ is not connected, an averaging argument shows that some connected component must have the desired density of at least $(1-\epsilon)$ of type $\tau^*$ vertices.

\section*{Acknowledgements}

We thank Christoph Lenzen and Roger Wattenhofer for discussions. This work was supported in part by the Academy of Finland, Grants 132380 and 252018, the Research Funds of the University of Helsinki, and the Finnish Cultural Foundation.

\newpage
\appendix

\section{Groups of Section~\texorpdfstring{\ref{sec:homog-graphs}}{5} in More Detail}\label{app:wreath}

This appendix contains expository material on the structure and properties of groups $H_i$, $W_i$, and $U_i$ for the convenience of the reader who is not familiar with semi-direct products. The notation we introduce here will become useful in Appendix~\ref{app:high-girth}.

\subsection{Binary Trees}

We can interpret the elements of groups $H_i$, $W_i$, and $U_i$ as complete binary trees of height $i$: there are $i$ levels of \emph{internal nodes}, one level of \emph{leaf nodes}, and all internal nodes have two children. The number of internal nodes is $d(i) = 2^i - 1$.

Moreover, there is some data associated with the internal nodes: in $H_i$ the internal nodes are labelled with the elements of $\Z_n$, in $W_i$ they are elements of $\Z_2$, and in $U_i$ they are elements of $\Z$.

\subsection{Group \texorpdfstring{$\boldsymbol{U_i}$}{Ui}}

We will now focus on the case of group $U_i$; the other two cases are similar. Wherever reasonable, we will use the convention that $x, y, \dotsc \in \Z$, $A, B, \dotsc \in U_i$, and $\alpha, \beta, \dotsc \in U_j$ for some $j < i$.

\paragraph{Base Case.}

We will use the symbol $\circ$ to denote a binary tree of height $0$. Group $U_0$ is the trivial group $U_0 = \{\circ\}$.

\paragraph{Recursive Step.}

Now assume that we have defined group $U_{i-1}$; we proceed to define group $U_i$. Elements of group $U_i$ are triples
$(\alpha, \beta, x)$, where
$\alpha, \beta \in U_{i-1}$ and
$x \in \Z$.
Intuitively, $(\alpha, \beta, x)$ is a tree with the element $x$ as the root node, $\alpha$ as the left subtree, and $\beta$ as the right subtree.

In what follows, we will write $\even{x}$ for an even $\even{x} \in \Z$ and $\odd{x}$ for an odd $\odd{x} \in \Z$. Group $\Z$ acts on $U_{i-1} \times U_{i-1}$ as follows, depending on the parity:
\begin{align*}
    \even{x} \cdot (\alpha, \beta) &= (\alpha, \beta), \\
    \odd{x} \cdot (\alpha, \beta) &= (\beta, \alpha).
\end{align*}
Hence by the definition of the semi-direct product, the group operation in $U_i$ is
\begin{align*}
    (\alpha, \beta, \even{x}) (\gamma, \delta, \even{y})
    &= ( \alpha\gamma,\ \beta\delta,\ \even{x} + \even{y}), \\
    (\alpha, \beta, \odd{x}) (\gamma, \delta, \even{y})
    &= ( \alpha\delta,\ \beta\gamma,\ \odd{x} + \even{y}), \\
    (\alpha, \beta, \even{x}) (\gamma, \delta, \odd{y})
    &= ( \alpha\gamma,\ \beta\delta,\ \even{x} + \odd{y}), \\
    (\alpha, \beta, \odd{x}) (\gamma, \delta, \odd{y})
    &= ( \alpha\delta,\ \beta\gamma,\ \odd{x} + \odd{y}).
\end{align*}

\paragraph{Shorthand Notation.}

To make this a bit easier to approach, let us define some shorthand notation:
\begin{align*}
    I_0 &= \circ, \\
    I_i &= (I_{i-1}, I_{i-1}, 0), \\
    X_i(x) &= (I_{i-1}, I_{i-1}, x), \\
    [\alpha, \beta] &= (\alpha, \beta, 0).
\end{align*}
In particular, $I_i \in U_i$ is an empty tree of height $i$: all internal nodes are labelled with zeroes. Observe that $I_i$ is the identity element of $U_i$:
\[
    (\alpha, \beta, x) I_i = (\alpha, \beta, x) = I_i (\alpha, \beta, x).
\]
Moreover, we can express any element as a product of $[\cdot,\cdot]$ and $X_i$:
\[
    [\alpha, \beta] \, X_i(x) = (\alpha, \beta, x).
\]
We will omit the subscript $i$ when it is clear from the context.

\paragraph{Examples of Group Operations.}

Now the group operations are much more straightforward:
\begin{align*}
    [\alpha, \beta] \, [\gamma, \delta] &= [\alpha\gamma, \beta\delta], \\
    X(x) \, X(y) &= X(x + y), \\
    X(\even{x}) \, [\alpha, \beta] &= [\alpha, \beta] \, X(\even{x}), \\
    X(\odd{x}) \, [\alpha, \beta] &= [\beta, \alpha] \, X(\odd{x}), \\
    [\alpha, \beta]^{-1} &= [\alpha^{-1}, \beta^{-1}], \\
    X(x)^{-1} &= X(-x).
\end{align*}
Here are some further examples:
\begin{align*}
    (\alpha, \beta, x)^{-1}
    &= \bigl([\alpha, \beta] \, X(x)\bigr)^{-1} \\
    &= X(x)^{-1} \, [\alpha, \beta]^{-1} \\
    &= X(-x) \, [\alpha^{-1}, \beta^{-1}], \\
    (\alpha, \beta, \even{x})^{-1}
    &= (\alpha^{-1}, \beta^{-1}, -\even{x}), \\
    (\alpha, \beta, \odd{x})^{-1}
    &= (\beta^{-1}, \alpha^{-1}, -\odd{x}).
\end{align*}
If $A \in U_i$, we can interpret the left multiplication as follows from the perspective of trees:
\begin{itemize}
    \item $X(\even{x}) \, A$: increment the label of the root node by $\even{x}$.
    \item $X(\odd{x}) \, A$: exchange the left and the right subtree of $A$, and increment the label of the root node by $\odd{x}$.
    \item $[\alpha,\beta] \, A$: recursively apply $\alpha$ to the left subtree of $A$ and $\beta$ to the right subtree of $A$.
\end{itemize}

\subsection{Positive Cone}\label{app:pos-cone}

Let us now have a closer look at the definition of a positive cone $P$ in \eqref{eq:pos-cone}. The properties of $P$ are easy to verify if we consider the following alternative, recursive definition. For each $i > 0$, we say that $A = (\alpha, \beta, x) \in U_i$ is \emph{positive} if one of the following holds:
\begin{enumerate}[noitemsep]
    \item $x > 0$,
    \item $x = 0$ and $\beta$ is positive,
    \item $x = 0$ and $\beta = I_{i-1}$ and $\alpha$ is positive.
\end{enumerate}
Finally, element $I_0 \in U_0$ is not positive. Let $P_i \subseteq U_i$ consist of all positive elements of $U_i$. Let us prove the following properties.

\begin{lemma}
    For any $i$ and $A \in U_i$, precisely one of the following is true: $A = I_i$, $A \in P_i$, or $A^{-1} \in P_i$.
\end{lemma}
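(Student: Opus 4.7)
The plan is to induct on $i$. For the base case $i = 0$ we have $U_0 = \{I_0\}$ and $P_0 = \emptyset$, so the only possibility is $A = I_0$ and the trichotomy holds vacuously on the other two options.

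For the inductive step, write $A = (\alpha,\beta,x) \in U_i$ and recall that, recursively, $A = I_i$ iff $x = 0$, $\beta = I_{i-1}$, and $\alpha = I_{i-1}$. The inversion formulas from the preceding subsection give $A^{-1} = (\alpha^{-1},\beta^{-1},-x)$ when $x$ is even and $A^{-1} = (\beta^{-1},\alpha^{-1},-x)$ when $x$ is odd. I will split into the case $x \neq 0$ and the case $x = 0$, and in the second case further split on $\beta$.

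If $x \neq 0$, then $A \neq I_i$, and clauses (2) and (3) of the recursive definition of $P_i$ are blocked for both $A$ and $A^{-1}$ because their third coordinates $x$ and $-x$ are nonzero. Hence membership in $P_i$ reduces to clause (1), and exactly one of $x > 0$, $-x > 0$ holds. If $x = 0$, then $-x = 0$ is even and $A^{-1} = (\alpha^{-1},\beta^{-1},0)$; clause (1) fails for both. When $\beta \neq I_{i-1}$, the induction hypothesis applied to $\beta$ says exactly one of $\beta, \beta^{-1}$ lies in $P_{i-1}$, and clause (3) is blocked for both $A$ and $A^{-1}$ (since $\beta, \beta^{-1}$ are both distinct from $I_{i-1}$), so clause (2) delivers the trichotomy and $A \neq I_i$. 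When $\beta = I_{i-1}$, clause (2) is blocked for both, and I invoke induction on $\alpha$: if $\alpha = I_{i-1}$ then $A = I_i = A^{-1}$ and neither is positive, while otherwise exactly one of $\alpha, \alpha^{-1}$ lies in $P_{i-1}$, and clause (3) settles the trichotomy.

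The main obstacle is purely bookkeeping: one has to check that in each subcase the disallowed clauses really are disallowed for both $A$ and $A^{-1}$ simultaneously, so that the induction hypothesis controls the one clause that remains in play. The crucial alignment is that whenever $x = 0$, inversion acts without swapping the $\alpha,\beta$ coordinates (since $0$ is even), which keeps the recursion on $\beta$ and $\alpha$ coherent between $A$ and $A^{-1}$; the swap case in the inversion formula never intrudes here because odd $x$ already forces the first case $x \neq 0$, where the recursive clauses are irrelevant.
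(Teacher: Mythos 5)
Your proof is correct and follows essentially the same inductive structure as the paper's: base case on $i=0$, then split on $x\ne 0$ (where positivity is decided purely by clause~(1) and the sign of~$x$) versus $x=0$ (where the paper writes only ``the claim follows by a simple case analysis''). You simply carry out that case analysis explicitly, splitting further on $\beta = I_{i-1}$ versus $\beta \ne I_{i-1}$ and invoking the induction hypothesis on $\beta$ or $\alpha$ as appropriate; the key observation that inversion does not swap coordinates when $x=0$ is exactly the point that makes the recursion stay aligned, and you state it correctly.
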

\begin{proof}
    The proof is by induction. The case $i = 0$ is trivial: $U_0 = \{ I_0 \}$, and $I_0^{-1} = I_0$ is not positive by definition.
    
    Now assume that $i > 0$ and $A = (\alpha, \beta, x)$. If $x \ne 0$, we have $A^{-1} = (\gamma, \delta, -x)$ for some $\gamma, \delta \in U_{i-1}$. Therefore we have either (i)~$A \ne I_i$, $A \in P$, and $A^{-1} \notin P$, or (ii)~$A \ne I_i$, $A \notin P$, and $A^{-1} \in P$.
    
    Otherwise $x = 0$, and we have $A^{-1} = (\alpha^{-1}, \beta^{-1}, 0)$. The claim follows by a simple case analysis.
\end{proof}

\begin{lemma}
    If $A, B \in P_i$, we have $AB \in P_i$.
\end{lemma}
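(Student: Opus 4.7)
The plan is to proceed by induction on $i$. The base case $i=0$ is vacuous because $U_0 = \{I_0\}$ with $I_0$ not positive, so $P_0 = \emptyset$ and there is nothing to check. Assume the claim holds at level $i-1$, and write $A = (\alpha, \beta, x)$ and $B = (\gamma, \delta, y)$ with $A, B \in P_i$.

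The first key observation is that regardless of the parity of $x$, the third coordinate of the product $AB$ is always $x + y$: the parity of $x$ only affects whether the left and right subtrees of $B$ get swapped, not the scalar component. Moreover, $A, B \in P_i$ together with the recursive definition force $x \ge 0$ and $y \ge 0$. Hence $x + y \ge 0$, and if $x + y > 0$ then $AB \in P_i$ immediately by clause~(1) of the definition, finishing this case.

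The remaining case is $x = y = 0$. Then $x$ is even, so the product simplifies to $AB = (\alpha\gamma,\, \beta\delta,\, 0)$. Since the third coordinate is $0$, for $A$ to be positive we have either (2a)~$\beta \in P_{i-1}$ or (2b)~$\beta = I_{i-1}$ and $\alpha \in P_{i-1}$; and analogously for $B$ in terms of $\delta$ and $\gamma$. One now runs through the four pairings: if both $\beta, \delta \in P_{i-1}$, the inductive hypothesis gives $\beta\delta \in P_{i-1}$; if exactly one of $\beta, \delta$ equals $I_{i-1}$, then $\beta\delta$ is simply the other factor, which lies in $P_{i-1}$; and if $\beta = \delta = I_{i-1}$, then $\beta\delta = I_{i-1}$ and the inductive hypothesis applied to $\alpha, \gamma \in P_{i-1}$ yields $\alpha\gamma \in P_{i-1}$, giving clause~(3) for $AB$.

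There is no real obstacle here beyond bookkeeping: the only subtlety is verifying that the third coordinate of the product does not depend on which of the two semi-direct product formulas applies, which is immediate from the fact that the $\Z$-action only permutes coordinates of the $U_{i-1}^2$-component. Once that is observed, the induction is a routine case split matching the three clauses of the definition of $P_i$.
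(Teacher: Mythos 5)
Your proof is correct and follows the same inductive structure that the paper sketches, splitting on whether $x+y>0$ and otherwise noting $x=y=0$ forces $AB=(\alpha\gamma,\beta\delta,0)$, then checking the three subcases of positivity. You have simply spelled out the ``simple case analysis'' that the paper leaves implicit.
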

\begin{proof}
    The proof is by induction. The claim is trivial if $i = 0$. Otherwise, let $A = (\alpha_1, \beta_1, x)$ and $B = (\alpha_2, \beta_2, y)$. If $x > 0$ or $y > 0$, we have $AB = (\alpha, \beta, x+y) \in P_i$, as $x + y > 0$. Otherwise $x = y = 0$ and $AB = (\alpha_1 \alpha_2, \beta_1 \beta_2, 0)$, and the claim follows by a simple case analysis.
\end{proof}

Hence if we define a relation $<$ on $U_i$ by $A < B \iff A^{-1} B \in P_i$, we have the following properties:
\begin{enumerate}[noitemsep]
    \item $A < B$ and $B < C$ implies $A < B < C$,
    \item $A < B$ implies $CA < CB$, and
    \item for all $A, B \in U_i$ precisely one of the following is true: $A = B$, $A < B$, or $A > B$.
\end{enumerate}
That is, $<$ is a left-invariant linear order.

\subsection{Group \texorpdfstring{$\boldsymbol{W_i}$}{Wi}}

Let us now compare $W_i$ with $U_i$. The case of $W_i$ is further simplified, as we have only two possible values of $x \in \{0,1\}$. Moreover, $X(0) = I$ is the identity element; hence there is only one non-trivial case, $X(1)$:
\begin{align*}
    X(1) \, [\alpha, \beta] &= [\beta, \alpha] \, X(1), \\
    X(1)^{-1} &= X(1), \\
    (\alpha, \beta, 0)^{-1}
    &= (\alpha^{-1}, \beta^{-1}, 0), \\
    (\alpha, \beta, 1)^{-1}
    &= (\beta^{-1}, \alpha^{-1}, 1).
\end{align*}
Intuitively, an element $A \in W_i$ is simply an automorphism of a complete binary tree of height $i$: for example, $X(1)$ is an automorphism that exchanges the left and the right subtree of the root node. The group operation is, in essence, the composition of automorphisms.

\paragraph{More Shorthand Notation.}

The following notation will prove useful in Appendix~\ref{app:high-girth}. First, let us extend the bracket notation to cover an arbitrary vector of length $\ell = 2^k$:
\begin{align*}
    [ \alpha_1, \alpha_2, \alpha_3, \alpha_4 ] &= \bigl[ [\alpha_1, \alpha_2],\, [\alpha_3, \alpha_4] \bigr], \\
    [ \alpha_1, \alpha_2, \dotsc, \alpha_8 ] &= \Bigl[ \bigl[ [\alpha_1, \alpha_2],\, [\alpha_3, \alpha_4] \bigr],\ \bigl[ [\alpha_5, \alpha_6],\, [\alpha_7, \alpha_8] \bigr] \Bigr], \ \dotsc
\end{align*}
Then, we extend the function $X$ in an analogous fashion. For any $\ell = 2^k$, we define
\[
    X_i(x_1, x_2, \dotsc, x_\ell) = [ X_{i-k}(x_1),\, X_{i-k}(x_2),\, \dotsc,\, X_{i-k}(x_\ell) ].
\]
and
\[
    X_i(r; \ell) = X_i(y_1, y_2, \dotsc, y_\ell), \text{where } y_r = 1 \text{ and } y_j = 0 \text{ for all } j \ne r.
\]
For example,
\[
    X(3; 4) = X(0, 0, 1, 0) = \bigl[ [ X(0), X(0) ],\, [ X(1), X(0) ] \bigr].
\]
Intuitively, $X_i(x_1, x_2, \dotsc, x_\ell)$ is a complete binary tree of height $i$ that is mostly empty: the nodes at depth $k$ have values $x_1, x_2, \dotsc, x_\ell$, but all other nodes have value $0$.

\paragraph{Examples of Group Operations.}

We conclude this section with the following examples of group operations in $W_i$:
\begin{align*}
    X(x_1, x_2, \dotsc, x_\ell)^{-1}
        &= X(x_1, x_2, \dotsc, x_\ell), \\
    X(x_1, x_2, \dotsc, x_\ell) \, X(y_1, y_2, \dotsc, y_\ell)
        &= X(x_1 + y_1,\, x_2 + y_2,\, \dotsc,\, x_\ell + y_\ell), \\
    X(x_1, x_2, \dotsc, x_\ell) \, [ \alpha_1, \alpha_2, \dotsc, \alpha_\ell ]
        &= [ X(x_1)\,\alpha_1,\, X(x_2)\,\alpha_2,\, \dotsc,\, X(x_\ell)\,\alpha_\ell ].
\end{align*}
In particular, $X(r;\ell)$ exchanges the $r$th pair of subtrees:
\begin{align*}
    X(3;4) \, [\alpha_1,\beta_1,\alpha_2,\beta_2,\alpha_3,\beta_3,\alpha_4,\beta_4]
        &= [\alpha_1,\beta_1,\alpha_2,\beta_2,\beta_3,\alpha_3,\alpha_4,\beta_4] X(3;4).
\end{align*}

\section{Constructing High-Girth Generators}\label{app:high-girth}

In this appendix, we give an alternative proof of Theorem~\ref{thm:high-girth}. Our proof borrows many ideas from the original proof of Gamburd et al.~\cite{gamburd09girth}. However, while they use the probabilistic method to prove that a set $S$ exists, we give a simple explicit construction. In terms of the asymptotic growth of girth, this version is weaker but sufficient for our purposes. We will use the notation defined in Appendix~\ref{app:wreath}; see Figure~\ref{fig:high-girth} for an illustration.

\subsection{Preliminaries}

Fix integers $g$ and $m$. For $i = 0, 1, \dotsc, g$, define
\begin{align*}
    f(i) &= m+g-i, \\
    h(i) &= i (2m+2g-i+1) / 2, \\
    n(i) &= 2^{f(i)}.
\end{align*}
Note that we have
\begin{align*}
    h(0) &= 0, \\
    h(i) - h(i-1) &= f(i) + 1, \\
    n(i) / n(i-1) &= 1/2.
\end{align*}

For each $i = 1,2,\dotsc,g$, let $G_i$ be the direct product
\[
    G_i = W_{h(i-1)+1}^{n(i)} = W_{h(i-1)+1} \times W_{h(i-1)+1} \times \dotso \times W_{h(i-1)+1}.
\]
As $h(i-1) + 1 + \log_2 n(i) = h(i)$, we can interpret $G_i$ as a subgroup of $W_{h(i)}$: it consists of elements of the form
\[
    [ \alpha_1, \alpha_2, \dotsc, \alpha_{n(i)}] \in W_{h(i)}, \quad \alpha_i \in W_{h(i-1)+1}.
\]
For each $i = 1,2,\dotsc,g$ and $k \le n(i)$ we define the projections $p_i^k\colon G_i \to W_{h(i-1)+1}$ by
\[
    p_i^k( [\alpha_1, \alpha_2, \dotsc, \alpha_{n(i)}] ) = \alpha_k.
\]

\subsection{Complex Words}

Let $S_i \subseteq W_{h(i)}$. A \emph{word in $S_i$} is a product of the form
\[
    w = A_1^{\sigma(1)} \, A_2^{\sigma(2)} \, \dotsm A_\ell^{\sigma(\ell)}
\]
such that $A_j \in S_i$ and $\sigma(j) \in \{ -1, 1 \}$ for each $j$. We say that $\ell$ is the \emph{length} of the word. The word $w$ \emph{vanishes} if we have $w = I_{h(i)}$ in group $W_{h(i)}$.

The word is \emph{reduced} if $A_i = A_{i+1}$ implies $\sigma(i) = \sigma(i+1)$. That is, in a reduced word we do not have a subword of the form $B B^{-1}$ or $B^{-1} B$, $B \in S_i$. In the Cayley graph $\Cay(W_{h(i)}, S_i)$, a walk of length $\ell$ corresponds to a word of length $\ell$, a non-backtracking walk corresponds to a reduced word, and a cycle corresponds to a reduced walk that vanishes.

Let $X = \{ A_1, A_2, \dotsc, A_\ell \}$. We say that the \emph{complexity} of the word $w$ is $\ell - |X|$. For example, if $L_1$, $L_2$, and $L_3$ are distinct elements of $S_i$, then
\begin{enumerate}[noitemsep]
    \item the complexity of $L_1 L_2 L_3$ and $L_2^{-1} L_3$ is $0$,
    \item the complexity of $L_1 L_2 L_1 L_3$, $L_3 L_3$, and $L_1 L_3 L_1^{-1}$ is $1$,
    \item the complexity of $L_1 L_1 L_2 L_2$ and $L_1 L_3 L_1^{-1} L_1^{-1}$ is $2$.
\end{enumerate}
Obviously, the complexity of any word of length $g$ is at most $g-1$.

We say that a word $w$ is \emph{$k$-complex} if it is reduced and its complexity is at most $k$. For example, a $2$-complex word is always a $3$-complex word as well.

\subsection{Overview}

We will construct sets $S_1, S_2, \dotsc, S_g$ such that
\begin{enumerate}[noitemsep]
    \item $S_i \subseteq G_i$,
    \item $|S_i| = n(i)$,
    \item $(i-1)$-complex words in $S_i$ do not vanish.
\end{enumerate}
It follows that
\begin{enumerate}[noitemsep]
    \item $S_g \subseteq G_g$,
    \item $|S_g| = 2^m$,
    \item reduced words of length $1 \le \ell \le g$ in $S_g$ do not vanish.
\end{enumerate}
It follows that the girth of $\Cay(W_{h(g)}, S_g)$ is larger than $g$. In particular, Theorem~\ref{thm:high-girth} follows by choosing $g = 2r+1$ and $m > \log_2 k$. Our construction satisfies
\[
    g + m > \sqrt{h(g)} > \sqrt{\log_2 \log_2 |W_{h(g)}|}.
\]

\subsection{Base Case}

\paragraph{Construction.}

The base case of our construction, $S_1 \subseteq G_1$ is straightforward. Note that we have $h(0) + 1 = 1$; hence $G_1 = W_1^{n(1)}$. Recall that $W_1 = \{ I, X(1) \}$. We define
\begin{align*}
    L_j &= X_{h(1)}(j; n(1)), \\
    S_1 &= \{ L_1, L_2, \dotsc, L_{n(1)} \}.
\end{align*}
Observe that we have the projections
\begin{alignat*}{2}
    p_i^k(L_j) = p_i^k(L_j^{-1}) &= I, & k &\ne j, \\
    p_i^k(L_j) = p_i^k(L_j^{-1}) &= X(1), & \qquad k &= j.
\end{alignat*}

\paragraph{Correctness.}

Now assume that $w$ is a $0$-complex word in $S_1$. Then there is at least one $L_k \in S_1$ that occurs in $w$; moreover, $L_k$ occurs only once. It follows that $p_1^k(w) = X(1) \ne I$, and hence $w$ does not vanish in $G_1$.

\subsection{Recursive Step}\label{app:rec-step}

\paragraph{Construction.}

Let $i \ge 2$. Assume that we have already constructed a set $S_{i-1}$ such that $(i-2)$-complex words in $S_{i-1}$ do not vanish in $W_{h(i-1)}$. We proceed to construct $S_i \subseteq W_{h(i)}$.

We have $|S_{i-1}| = 2n(i)$; hence we can label the elements of $S_{i-1}$ by $\gamma_j$ and $\delta_j$ where $j = 1,2,\dotsc,n(i)$. We define
\begin{align*}
    K_j &= [ \gamma_j,\, \delta_j,\, \gamma_j,\, \delta_j,\, \dotsc,\, \gamma_j,\, \delta_j ], \\
    L_j &= K_j \, X_{h(i)}(j; n(i)), \\
    S_i &= \{ L_1, L_2, \dotsc, L_{n(i)} \}.
\end{align*}
Observe that we have the projections
\begin{alignat*}{2}
    p_i^k(K_j) &= [ \gamma_j, \delta_j ], \\
    p_i^k(L_j) &= [ \gamma_j, \delta_j ], & k &\ne j, \\
    p_i^k(L_j) &= [ \gamma_j, \delta_j ] \, X(1), & \qquad k &= j, \\
    p_i^k(L_j^{-1}) &= [ \gamma_j^{-1}, \delta_j^{-1} ], & k &\ne j, \\
    p_i^k(L_j^{-1}) &= X(1) \, [ \gamma_j^{-1}, \delta_j^{-1} ], & \qquad k &= j.
\end{alignat*}

\paragraph{Intuition.}

We will prove that a $(i-1)$-complex word $w$ in $S_i$ does not vanish in $W_{h(i)}$. We will argue that there is at least one projection $p_i^k(w) \ne I$. To gain some intuition, let us begin with some examples.

First, let $w = L_1 L_2 L_3 L_1$. Now
\begin{align*}
    p_i^1(w)
    &= [ \gamma_1, \delta_1 ] \, X(1) \, [ \gamma_2, \delta_2 ] \, [ \gamma_3, \delta_3 ] \, [ \gamma_1, \delta_1 ] \, X(1) \\
    &= [ \gamma_1 \delta_2 \delta_3 \delta_1,\, \delta_1 \gamma_2 \gamma_3 \gamma_1 ], \\
    p_i^2(w) 
    &= [ \gamma_1, \delta_1 ] \, [ \gamma_2, \delta_2 ] \, X(1) \, [ \gamma_3, \delta_3 ] \, [ \gamma_1, \delta_1 ] \\
    &= [ \gamma_1 \gamma_2 \delta_3 \delta_1,\, \delta_1 \delta_2 \gamma_3 \gamma_1 ] \, X(1), \\
    p_i^3(w)
    &= [ \gamma_1, \delta_1 ] \, [ \gamma_2, \delta_2 ] \, [ \gamma_3, \delta_3 ] \, X(1) \, [ \gamma_1, \delta_1 ] \\
    &= [ \gamma_1 \gamma_2 \gamma_3 \delta_1,\, \delta_1 \delta_2 \delta_3 \gamma_1 ] \, X(1), \\
    p_i^4(w)
    &= [ \gamma_1, \delta_1 ] \, [ \gamma_2, \delta_2 ] \, [ \gamma_3, \delta_3 ] \, [ \gamma_1, \delta_1 ] \\
    &= [ \gamma_1 \gamma_2 \gamma_3 \gamma_1,\, \delta_1 \delta_2 \delta_3 \delta_1 ].
\end{align*}
In particular, $w$ is a word of length $4$ in $S_i$, and it follows that all projections are of form $[ w_1, w_2 ] \, X(x)$, where $w_1$ and $w_2$ are words of length $4$ in $S_{i-1}$. Moreover, as the original word $w$ was $1$-complex, $w_1$ and $w_2$ are also $1$-complex: each $L_j$ contributes precisely one $\delta_j$ or $\gamma_j$ in $w_1$ and $w_2$. Projection $p_i^4(w)$ is not useful for our purposes; nothing interesting happens there, as $w$ did not contain any copies of $L_4$. However, the case of $p_i^1(w)$ is more interesting: we have $p_i^1(w) = [ w_1, w_2 ]$, and this time $w_1$ and $w_2$ are not only $1$-complex but also $0$-complex.

Second, let $w = L_2 L_1 L_2^{-1}$. Now
\begin{align*}
    p_i^1(w)
    &= [ \gamma_2, \delta_2 ] \, [ \gamma_1, \delta_1 ] \, X(1) \, [ \gamma_2^{-1}, \delta_2^{-1} ] \\
    &= [ \gamma_2 \gamma_1 \delta_2^{-1}, \, \delta_2 \delta_1 \gamma_2^{-1} ].
\end{align*}
Again we were able to identify a projection $p_i^1(w) = [ w_1, w_2 ]$ such that $w_1$ and $w_2$ are $0$-complex even though $w$ is $1$-complex.

\paragraph{Correctness.}

To generalise the above observations, we study the following cases that cover all possible $(i-1)$-complex words in $S_i$.
\begin{enumerate}
    \item For some $k$, word $w$ is of the form
    \[
        w = s L_k t L_k u,
    \]
    where $t$ does not contain any copies of $L_k$ or $L_k^{-1}$. It follows that we have
    \begin{align*}
        p_i^k(s) &= X(s') \, [ s_1, s_2 ], &
        p_i^k(t) &= [ t_1, t_2 ], &
        p_i^k(u) &= [ u_1, u_2 ] \, X(u')
    \end{align*}
    for some $s_1, s_2, t_1, t_2, u_1, u_2 \in W_{h(i-1)}$ and $s', u' \in \{0,1\}$. Hence
    \begin{align*}
        p_i^k(w) &= X(s') \, [ s_1, s_2 ] \, [ \gamma_k, \delta_k ] \, X(1) \,
                             [ t_1, t_2 ] \, [ \gamma_k, \delta_k ] \, X(1) \,
                             [ u_1, u_2 ] \, X(u') \\
        &= X(s') \, [ s_1 \gamma_k t_2 \delta_k u_1,\,
                      s_2 \delta_k t_1 \gamma_k u_2 ] \, X(u').
    \end{align*}
    Now if $w$ is an $(i-1)$-complex word in $S_i$, then $s_1 \gamma_k t_2 \delta_k u_1$ is $(i-2)$-complex in $S_{i-1}$: we have removed at least one duplicate element, as one slot of $L_k$ is replaced with $\gamma_k$ and while the other slot is replaced with $\delta_k$. Hence $p_i^k(w) \ne I$ and $w$ does not vanish.
    \item For some $k$, word $w$ is of the form $w = s L_k^{-1} t L_k^{-1} u$, where $t$ does not contain any copies of $L_k$ or $L_k^{-1}$. It follows that $w^{-1} = u^{-1} L_k t^{-1} L_k u^{-1}$, and the above argument shows that $w^{-1} \ne I$. Hence $w$ does not vanish.
    \item For some $j \ne k$ and some $a,b,c \in \{-1,1\}$, word $w$ is of the form
    \[
        w = s \, L_k^{a} \, t \, L_j^{b} \, u \, L_k^{c} \, v,
    \]
    where $t$ and $u$ do not contain any copies of $L_j$, and $L_j^{-1}$. It follows that
    \begin{align*}
        p_i^j(s) &= X(s') \, [ s_1, s_2 ], &
        p_i^j(t) &= [ t_1, t_2 ], \\
        p_i^j(u) &= [ u_1, u_2 ], &
        p_i^j(v) &= [ v_1, v_2 ] \, X(v')
    \end{align*}
    for some $s_1, s_2, t_1, t_2, u_1, u_2, v_1, v_2 \in W_{h(i-1)}$ and $s', v' \in \{0,1\}$. Hence
    \begin{align*}
        p_i^j(w) &= X(s') \, [ s_1, s_2 ] \, [ \gamma_k^{a}, \delta_k^{a} ] \,
                             [ t_1, t_2 ] \, [ \gamma_j^{b}, \delta_j^{b} ] \, X(1) \,
                             [ u_1, u_2 ] \, [ \gamma_k^{c}, \delta_k^{c} ] \,
                             [ v_1, v_2 ] \, X(v') \\
        &= X(s') [ s_1 \, \gamma_k^{a} \, t_1 \, \gamma_j^{b} \, u_2 \, \delta_k^{c} \, v_2,\,
                   s_2 \, \delta_k^{a} \, t_2 \, \delta_j^{b} \, u_1 \, \gamma_k^{c} \, v_1 ] \, X(v' + 1).
    \end{align*}
    Again, if $w$ is an $(i-1)$-complex word in $S_i$, then $s_1 \, \gamma_k^{a} \, t_1 \, \gamma_j^{b} \, u_2 \, \delta_k^{c} \, v_2$ is $(i-2)$-complex in $S_{i-1}$: on slot of $L_k$ contributes a $\gamma_k$ while the other slot contributes a $\delta_k$. Hence $w$ does not vanish.
    \item The above cases cover all reduced words $w$ that contain at least two occurrences of any element of $S_i$; recall that in a reduced word, we cannot have $L_k$ and $L_k^{-1}$ next to each other. The only remaining case is that all elements of $w$ are distinct. Then $w$ is $0$-complex, and $p_i^j(w) = [ w_1, w_2 ] \, X(x)$, where $w_1$ and $w_2$ are $0$-complex. In particular, they are $(i-2)$-complex, they do not vanish, and hence $w$ does not vanish, either.
\end{enumerate}
This concludes the proof of Theorem~\ref{thm:high-girth}. \qed

\begin{figure}
    \centering
    \includegraphics[page=\PHighGirth]{figs.pdf}
    \caption{Construction of Appendix~\ref{app:high-girth}, in the case $g = 3$ and $m = 1$. We have $h(0) = 0$, $h(1) = 4$, $h(2) = 7$, and $h(3) = 9$. There are $n(1) = 8$ elements in $S_1$; these can be interpreted as complete binary trees of height $4$, as they are elements of group $W_4$. In the illustration, a black internal node indicates the value $1$ and a white internal node indicates the value $0$. For the purposes of constructing $S_2 = \{ L_1,L_2,L_3,L_4 \}$ in Appendix~\ref{app:rec-step}, we label the elements of $S_1$ by $\gamma_1, \delta_1, \gamma_2, \delta_2, \dotsc, \gamma_4, \delta_4$. Now $L_j$ is constructed by gluing together copies of $\gamma_j$ and $\delta_j$.}\label{fig:high-girth}
\end{figure}

\end{document}